\documentclass{article}

\usepackage{microtype}
\usepackage{graphicx}
\usepackage{subcaption}
\usepackage{booktabs} 

\usepackage{hyperref}

\usepackage[preprint]{icml2026}

\usepackage{amsmath}
\usepackage{amssymb}
\usepackage{mathtools}
\usepackage{amsthm}

\usepackage{graphicx}
\usepackage{url}            
\usepackage{booktabs}       
\usepackage{amsfonts}       
\usepackage{nicefrac}       
\usepackage{microtype}      
\usepackage{xcolor}         
\usepackage{multirow}
\usepackage{balance}
\usepackage{xspace}
\usepackage{enumitem}
\usepackage{wrapfig}
\usepackage{colortbl}
\usepackage{hyperref}
\usepackage{amsmath}
\usepackage{mathtools}
\usepackage{amsthm}
\usepackage{braket}

\newcommand{\ourmethod}{TF-LLMER\xspace}
\newcommand{\bs}{\boldsymbol}
\newcommand{\mc}{\mathcal}
\newcommand{\mb}{\mathbf}

\newcommand{\softmax}{\mathop{\text{softmax}}}

\newcommand{\R}{\mathbb{R}}

\newcommand{\llmname}[1]{{\fontfamily{pcr}\selectfont {#1}}\xspace} 

\usepackage[capitalize,noabbrev]{cleveref}

\theoremstyle{plain}
\newtheorem{theorem}{Theorem}[section]
\newtheorem{proposition}[theorem]{Proposition}
\newtheorem{lemma}[theorem]{Lemma}

\theoremstyle{definition}
\newtheorem{definition}[theorem]{Definition}
\newtheorem{assumption}[theorem]{Assumption}
\theoremstyle{remark}

\usepackage[textsize=tiny]{todonotes}

\icmltitlerunning{Break the Optimization Barrier of LLM-Enhanced Recommenders: A Theoretical Analysis and Practical Framework}

\begin{document}

\twocolumn[
  \icmltitle{Break the Optimization Barrier of LLM-Enhanced Recommenders: \\A Theoretical Analysis and Practical Framework}



  \icmlsetsymbol{equal}{*}

  \begin{icmlauthorlist}
    \icmlauthor{Zhangchi Zhu}{ecnu}
    \icmlauthor{Wei Zhang}{ecnu,inov}
  \end{icmlauthorlist}

  \icmlaffiliation{ecnu}{East China Normal University, Shanghai, China}
  \icmlaffiliation{inov}{Shanghai Innovation Institute, Shanghai, China}

  \icmlcorrespondingauthor{Wei Zhang}{zhangwei.thu2011@gmail.com}

  \icmlkeywords{Sequential Recommendation, Large Language Models, Graph Signal Processing}

  \vskip 0.3in
]



\printAffiliationsAndNotice{}  

\begin{abstract}
Large language model (LLM)-enhanced recommendation models inject LLM representations into backbone recommenders to exploit rich item text without inference-time LLM cost. However, we find that existing LLM-enhanced methods significantly hinder the optimization of backbone models, resulting in high training losses that are difficult to reduce. To address it, we establish a comprehensive theoretical analysis of local optimization curvature and identify two key causes: 1) large norm disparity and 2) semantic–collaboration misaligned angular clustering of LLM representations.
Guided by these insights, we propose \textbf{T}raining-\textbf{F}riendly \textbf{LLM-E}nhanced \textbf{R}ecommender (\textbf{\ourmethod}), a lightweight framework with two key components. First, we highlight the necessity of \emph{item embedding normalization} to eliminate norm-driven instability and achieve provable control over optimization conditioning. Second, we introduce \emph{Rec-PCA}, a recommendation-aware dimensionality reduction method that injects collaborative structure into the representation transformation to resolve semantic–collaboration misaligned angular clustering. It jointly optimizes semantic information retention and alignment with an item–item co-occurrence graph constructed from interaction histories. The graph captures collaborative structure, and alignment is promoted by penalizing total variation over the graph.
Both theory and extensive experiments demonstrate that \ourmethod significantly outperforms state-of-the-art methods. Our code is available at \url{https://github.com/woriazzc/TF-LLMER}.
\end{abstract}

\begin{figure}
\centering
  \includegraphics[width=\linewidth]{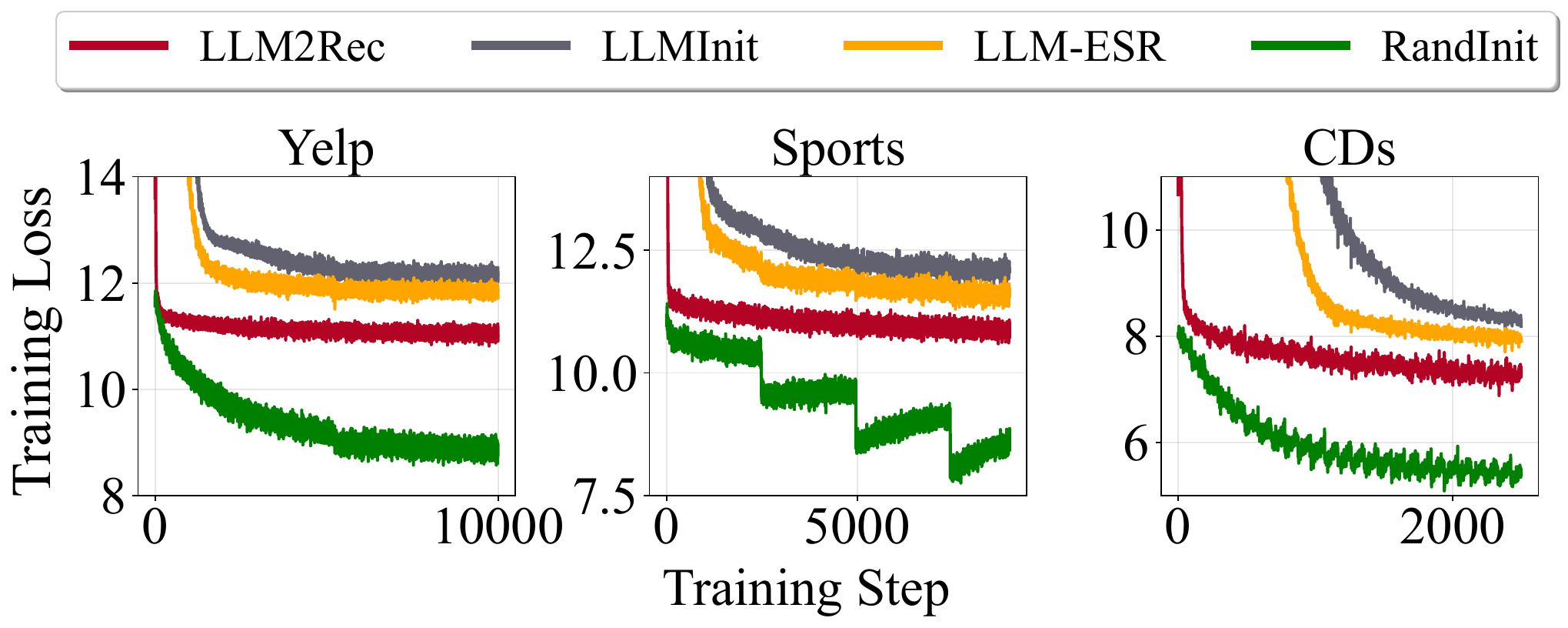}
    \vspace{-1.5em}
  \caption{Training loss of several methods, including the standard randomly initialized model (denoted as RandInit) and state-of-the-art LLM-enhanced methods. The backbone is GRU4Rec. Results on other backbones are provided in Figure~\ref{fig:loss_methods_all} in the appendix.}
  \label{fig:loss_methods_main}
    \vspace{-1.5em}
\end{figure}

\section{Introduction}\label{sec:intro}


Recent progress on large language models (LLMs) has led to a prominent line of \emph{LLM-enhanced recommender systems}. They leverage LLM-derived textual representations as item embeddings (or as initialization of item embeddings) of conventional recommendation backbone models, aiming to boost performance while avoiding inference-time LLM costs. This paradigm typically follows two steps: 1) use LLMs to extract appropriate semantic representations from the textual attributes of items, 2) transform these LLM representations to fit the embedding layer of a backbone recommendation model and retrain the backbone after the injection~\cite{shi2025matters}. For example, LLMInit~\cite{zhang2025llminit} uses the last tokens' hidden states as LLM representations. Then it initializes item embeddings using dimensionally reduced LLM representations. LLM-ESR~\cite{liu2024llm} transforms LLM representations using both PCA and adaptors, and combines them through cross-attention. LLM2Rec~\cite{he2025llm2rec} adapts LLMs to produce recommendation-friendly representations through collaborative supervised fine-tuning, followed by adaptor-based transformation into item embeddings.

However, existing methods focus solely on extracting better LLM representations, neglecting the effects after injecting them into the embedding layers of recommendation models. This very fact contains a significant problem. Specifically, we find that existing methods consistently encounter extremely high recommendation losses that are difficult to decrease when retraining recommendation models after injecting LLM representations, suggesting an optimization barrier. As shown in Figure~\ref{fig:loss_methods_main}, compared to the standard randomly initialized GRU4Rec, training losses of all LLM-enhanced models cease to decrease rapidly and stabilize at significantly higher values. Clearly, their optimization is hindered. However, existing methods have never explored how injecting LLM representations into backbone recommendation models affects their training, let alone identified this severe negative impact and discussed how to address it.

To address this oversight, we adopt a representation-level optimization perspective and analyze the optimization curvature of the training loss with respect to the sequence's representation. We establish a theoretical framework that uses the Hessian matrix's condition number to quantify the representation trainability. Our analysis reveals two fundamental and practically decisive obstacles in training the backbone of LLM-enhanced recommendation models: (i) \textbf{large norm disparities in LLM-derived item embeddings} that can make the induced curvature arbitrarily ill-conditioned, and (ii) \textbf{semantic-collaboration misaligned angular clustering} that blurs the separation among training-critical items and undermines effective optimization. These issues directly explain the unexpectedly poor backbone training behavior after injecting LLM representations.

To mitigate the norm-driven issue, we identify a necessary remedy: \textbf{item embedding normalization when computing item logits}. Existing LLM-enhanced methods treat LLM embeddings as plug-in item features and thus overlook the importance of normalization, leaving norm disparity to propagate into severe curvature ill-conditioning during backbone training. In contrast, normalization removes norm-induced instability and provides provable control over conditioning. We verify the necessity of normalization in LLM-enhanced recommenders both theoretically and empirically.

As for the second issue, it refers to the angular clustering of representations of recommendation-critical items, arising from the mismatch between semantic relevance within LLMs and the collaborative relevance required by recommendation tasks.
Existing methods introduce extra training when extracting LLM representations to adapt them to recommendation tasks. It has three fatal drawbacks: 1) substantial additional training overhead, 2) inappropriate training both damages LLM knowledge and fails to satisfy recommendation needs, and 3) gains are likely diluted by subsequent transformations required to inject LLM representations into item embeddings. We therefore propose \textbf{Rec-PCA}, a recommendation-aware dimensionality reduction method that injects collaborative information during representation-to-embedding transformation. It achieves semantic retention together with recommendation alignment enforced via a graph total-variation penalty on an item–item co-occurrence graph. A hyperparameter controls their trade-off. Unlike prior training-based adaptation, Rec-PCA 1) requires no training, only a PCA-like preprocessing. 2) A hyperparameter controls the trade-off, making the adaptation controllable and easy to tune. 3) Because it is applied directly in the transformation step, its effect is not diluted.

In summary, our contributions include:
\newline
$\bullet$ We develop a comprehensive theoretical framework based on representation-level induced curvature. Based on it, we show that injecting LLM representations into conventional recommenders significantly hinders their training.
\newline
$\bullet$ We theoretically show that unnormalized LLM-derived item embeddings lead to unbounded ill-conditioning, and item embedding normalization has long been overlooked yet remains essential to training LLM-enhanced recommenders. We validate it theoretically and empirically.
\newline
$\bullet$ We propose \textbf{Rec-PCA}, a collaboration-aware method that introduces a tunable mechanism to trade off semantic information retention and recommendation-task alignment during dimensionality reduction. We empirically and theoretically verify its effect on facilitating training.
\newline
$\bullet$ We conduct extensive experiments on three public datasets and three backbones to demonstrate the superiority of our method. We also empirically show that our method is compatible with existing LLM-enhanced methods.

\section{Related Work}

\subsection{Graph-based Recommender Systems}

Inspired by graph-based recommender systems, our Rec-PCA endows representations with collaborative structures by reducing their total variation across co-occurrence graphs.
Graph-based recommender systems~\cite{wu2022graph,ricci2021recommender,wang2021survey} conduct recommendations with graph signal processing. NGCF~\cite{wang2019neural} and LightGCN~\cite{he2020lightgcn} are the first to propose a graph-based recommendation framework. SR-GNN~\cite{wu2019session} introduces GNN into session-based recommendation by constructing an item-item co-occurrence graph from sequential interaction data. Recently, some works have analyzed graph models from a spectral perspective. They define low-frequency as representations with smaller total variation. GF-CF~\cite{shen2021powerful} shows that traditional graph-based recommenders are essentially low-pass filters. That is, they reduce the total variation of item representations on the item graph, making representations better match the graph. GDE~\cite{peng2022less} validates that low-frequency components that have lower total variation are much more important than high-frequency components. 

\subsection{LLM-Enhanced Recommender Systems}

LLM-enhanced recommendation models~\cite{liu2025large,zhang2024dual,zhang2024id} typically involve two steps~\cite{shi2025matters}: 1) extract textual representations from LLM. Some methods design training strategies at this step to adapt them to the recommendation task. 2) Transform these LLM representations to fit the embedding layer of a backbone recommendation model, and retrain the backbone after the injection. As pioneers, AlphaRec~\cite{sheng2024language} directly uses adaptors to transform LLM representations, treating them as item embeddings. LLMInit~\cite{zhang2025llminit} initializes item embeddings using dimensionally reduced LLM representations. Building upon these two straightforward methods, a great deal of research has been generated.
LLM-ESR~\cite{liu2024llm} transforms LLM representations into item embeddings through both adaptors and PCA-based initialization, and fuses them via cross attention. Pushing further, LLMEmb~\cite{liu2025llmemb} adapts LLMs to produce recommendation-friendly representations via contrastive fine-tuning and applies adaptors after PCA-reduced representations to integrate them into recommendation models. AlphaFuse~\cite{hu2025alphafuse} decomposes the language-embedding space and trains ID embeddings in the null space of LLM representations. Finally, LLM2Rec~\cite{he2025llm2rec} adapts LLMs to recommendation tasks through collaborative supervised fine-tuning. Then, it uses adaptors to transform LLM representations into item embeddings.

However, existing methods focused solely on extracting better LLM representations, neglecting the implications of injecting them into the embedding layers of recommendation models. Yet this very oversight conceals the most critical problem: LLM representation injection severely impedes backbone model training. Our work provides a detailed theoretical analysis and a practical framework for this issue.

\section{Prelimilary}

\subsection{Sequential Recommendation}
This work focuses on the sequential recommendation task.
Let $\mc I$ denote the item set and $|\mc I|$ denote the number of items. For a sample $(\mb q, y)$, consisting of the historical interaction sequence $\mb q$ and the ground-truth next item $y$, sequential recommendation models first represent $\mb q$ as $\bs h\in\R^d$, where $d$ is the dimension of embeddings. Then, they compute the logits on all items as the dot product of $\bs h$ and all items' embeddings $\mathbf{E}\in\R^{|\mc I|\times d}$, i.e., $\bs s=\mb E\bs h\in\R^{|\mc I|}$.

Following previous work~\cite{sun2019bert4rec,xu2024understanding,klenitskiy2023turning}, we use Cross-Entropy loss to retrain the backbone recommendation model after injection:
    \vspace{-1.5em}
\begin{align}
    \ell (\bs s, y)=-s_y+\log\sum_{j\in\mc I}e^{s_j}.\label{eq:loss}
\end{align}
    \vspace{-1.5em}

\subsection{Graph Signal Processing}

Our Rec-PCA augments PCA with graph signal processing. Given a graph $\mathcal{G}=(\mathcal{V},\mathcal{E})$ with $\mathcal{V}$ being the node set and $\mathcal{E}$ being the edge set, we denote $|\mathcal{V}|$ and $|\mathcal{E}|$ the number of nodes and the number of edges, respectively. Let $\mathbf{A}\in \mathbb{R}^{|\mathcal{V}|\times |\mathcal{V}|}$ be the adjacency matrix of $\mathcal{G}$ and $\mathbf{D}=\text{Diag}(\mathbf{A}\mathbf{1})$ be the degree matrix with $\mathbf{1}\in\mathbb{R}^{|\mathcal{V}|}$ being the all-one vector. The symmetric normalized graph Laplacian matrix is given by $\mathbf{L}=\mathbf{I}-\mathbf{D}^{-1/2}\mathbf{A}\mathbf{D}^{-1/2}$.
Graph signal processing provides tools for describing the relationship between graph signals and graphs. Specifically, given a graph $\mc G$ and a graph signal $\bs x\in\R^{|\mc V|}$, the \textit{total variation (TV)}~\cite{ortega2022introduction} measures the degree of matching between them. Formally, it is defined as
    \vspace{-0.5em}
\begin{align}
    TV(\bs{x};\mc G)=\sum_{i,j\in\mc V}\mathbf{A}_{i,j}(\bs{x}_i-\bs{x}_j)^2=\bs{x}^\top\mb{L}\bs{x}.
\end{align}
A smaller $TV(\bs{x};\mc G)$ indicates that the values of adjacent nodes in graph $\mc G$ have closer values in the signal $\bs x$. In other words, $\bs x$ matches $\mc G$ more closely. As shown in many works on graph-based recommender systems~\cite{guo2023manipulating,peng2022less,peng2022svd}, signals with smaller $TV(\bs{x};\mc G)$ often have a greater impact on recommendation performance.

\subsection{Model Optimization}

To theoretically analyze model training, we draw on relevant research in optimization theory.
The mainstream approach for evaluating the difficulty of a model optimization process involves analyzing the Hessian matrix~\cite{nesterov2013introductory}.
However, the full Hessian of recommendation models is intractable due to their higher-order interactions within attention/normalization modules~\cite{wu2020dissecting}.
We thus analyze the representation-level curvature to characterize the local anisotropy of loss in the subspace of reachable representation perturbations~\cite{martens2010deep,botev2017practical}. Specifically, we note that the sequence representation $\bs h$ is the decision representation: all ranking scores are functions of $\bs h$ through $\bs s=\mb E\bs h$. Therefore, the Hessian matrix of $\bs h$ that describes the local curvature of $\ell$ with respect to $\bs h$ provides an actionable description of the anisotropy.

By rewriting $\ell(\bs s,y)$ in Eq.(\ref{eq:loss}) equivalently as $\ell(\bs h)=\log\softmax(\mb E\bs h)_y$, the Hessian with respect to $\bs h$ is $\mb H_h:=\nabla_{\bs h}^2\ell(\bs h)=\mb E^\top\mb H_s\mb E$, where $\mb H_s$ is the Hessian matrix w.r.t. the logits $\bs s$.
The formula of $\mb H_s$ and derivation of Hessian matrices are in Appendix~\ref{proof:h}.
Then, we analyze the condition number of the Hessian matrix $\mb H_h$ defined as follows.
\begin{definition}[condition number~\cite{belsley2005regression}]
    The condition number of $\text{ }\mb H_h$ is
    \vspace{-0.5em}
    \begin{align}
        \kappa(\mb H_h)=\frac{\lambda_{\max}(\mb H_h)}{\lambda_{\min}(\mb H_h)},
    \end{align}
    where $\lambda_{\max}(\cdot)$ and $\lambda_{\max}(\cdot)$ denote the largest and smallest eigenvalues. The condition number of the Hessian matrix reflects the difficulty of optimization. In short, the larger the condition number, the more difficult the optimization. 
\end{definition}

However, due to the long-tail nature of recommender systems, most items are predicted to have near-zero interaction probability. The curvature is effectively concentrated on only a small subset of effective items (i.e., the positive and a small set of hard negatives). Therefore, analyzing the entire item universe yields $\lambda_{\min}(\mb H_h)\approx 0$, rendering the condition number of $\mb H_h$ undefined. To address this, we define an effective subspace that actually affects the optimization.

\begin{definition}[Effective Subspace]
     We define the effective subspace $\mc U\subset \R^{|\mc I|}$ as the subspace of the logit space spanned by the support of the gradient along the optimization trajectory. Let $m:=\text{dim}(\mc U)$, which in practice corresponds to the number of effective items, i.e., the positive item plus a small number of hard negatives, and $m\ll|\mc I|$.
    We define $\mc P_{\mc U}$ as the orthogonal projection operator from the entire logits space $\R^{|\mc I|}$ to $\mc U$.
\end{definition}

Then, we assume that the curvature neither collapses nor explodes in this effective subspace $\mc U$.
\begin{assumption}[Local Non-saturation Region]\label{ass:effective}
    We assume that $\text{ }\mb{H_s}$ stays in a local non-saturation region:
    \vspace{-0.5em}
    \begin{align*}
    \forall \bs v\in\mc U, \alpha\|\bs v\|^2\le \bs v^\top\mb H_s\bs v\le \beta\|\bs v\|^2, \text{for some } 0<\alpha\le \beta.
    \end{align*}
\end{assumption}
    \vspace{-0.5em}

Next, we provide updated notation regarding embeddings within our effective subspace. We denote $\mb E_{\mc U}:=\mc P_{\mc U}\mb E=\{\bs e_i\}_{i=1}^m$ as the item embeddings projected into $\mc U$. We also define $\hat{\bs e}_i:=\bs e_{i}/\|\bs e_i\|$ and $\hat{\mb E}_{\mc U}:=\{\hat{\bs e}_i\}_{i=1}^m$ as the normalized item embeddings. Finally, we discard items that contributed little to training and restrict our analysis to $\mc U$ by updating $\mb H_s$ to $\mb H_{s\mc U}:=\mc P_{\mc U}\mb H_s\mc P_{\mc U}^\top$ and $\mb H_h$ to $\mb H_{h\mc U}:=\mb E_{\mc U}^\top\mb H_{s\mc U}\mb E_{\mc U}$.

\section{Existing LLM-enhanced Methods Induce Ill-conditioned Loss Geometry}\label{sec:ill}

\begin{figure}
\centering
  \includegraphics[width=\linewidth]{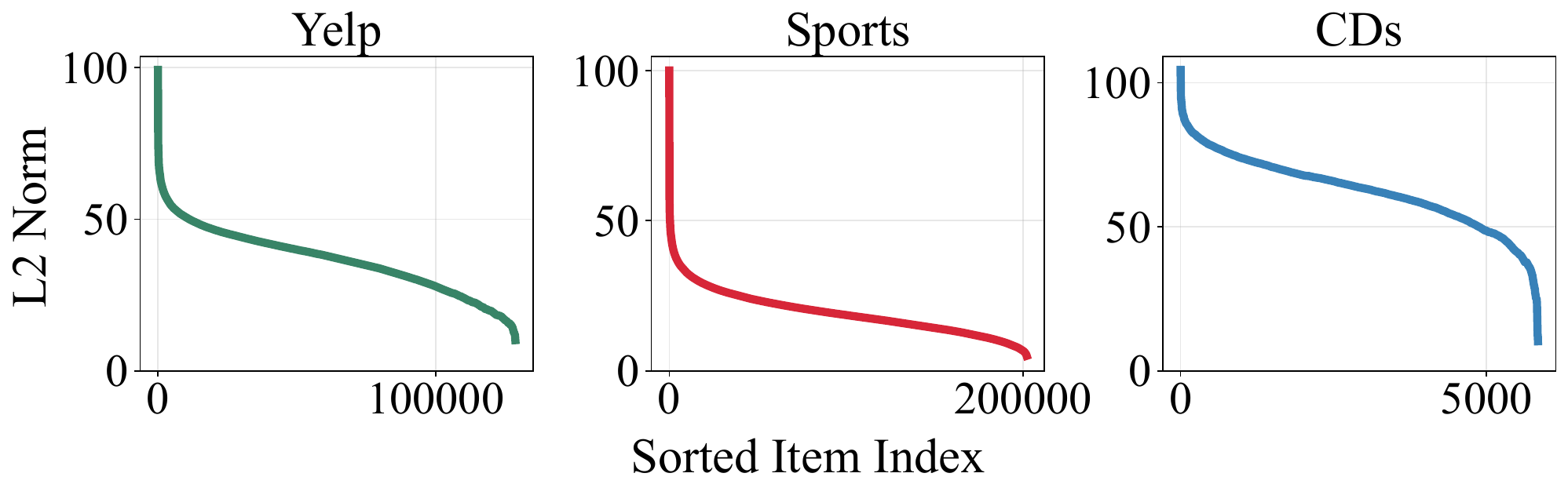}
    \vspace{-1.5em}
  \caption{The magnitudes of the initial item embeddings for all items. Items are sorted in descending order by magnitude. The initial item embeddings are provided by LLM2Rec.}
  \label{fig:norms}
    \vspace{-1.0em}
\end{figure}



Since we study model optimization through Hessian matrix's condition number, we introduce the following theorem that gives the upper bound of $\mb H_{h\mc U}$'s condition number.

\begin{theorem}\label{theorem:norm_disparity}
    Let $r_{\max}:=\max_{i=1}^m\|\bs e_i\|$ and $r_{\min}:=\min_{i=1}^m\|\bs e_i\|$ denote the maximal and minimal norms of item embeddings. We obtain 
    \vspace{-1.0em}
    \begin{align}
        \kappa(\mb H_{h\mc U})\le \frac{\beta}{\alpha}\cdot\left(\frac{r_{\max}}{r_{\min}}\right)^2\cdot\kappa(\hat{\mb E}_{\mc U}\hat{\mb E}_{\mc U}^\top).\label{eq:kappa_h}
    \end{align}
\end{theorem}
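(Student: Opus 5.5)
Write $\mb E_{\mc U}=\mb R\hat{\mb E}_{\mc U}$, where $\mb R=\text{Diag}(\|\bs e_1\|,\dots,\|\bs e_m\|)$ and $\mb E_{\mc U}$ is viewed as the $m\times d$ matrix whose rows are the $\bs e_i$. Then $\mb H_{h\mc U}=\hat{\mb E}_{\mc U}^\top\mb R\mb H_{s\mc U}\mb R\hat{\mb E}_{\mc U}$. The plan is to bound the largest and smallest eigenvalues separately through Rayleigh quotients, peeling off the three factors $\mb H_{s\mc U}$, $\mb R$ and $\hat{\mb E}_{\mc U}$ in turn. Throughout, the condition number is understood on the relevant range: $\kappa(\mb H_{h\mc U})$ on the row space of $\hat{\mb E}_{\mc U}$, and $\kappa(\hat{\mb E}_{\mc U}\hat{\mb E}_{\mc U}^\top)$ as the ratio of extreme eigenvalues of the Gram matrix. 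If the $\hat{\bs e}_i$ are linearly independent ($m\le d$), this Gram matrix is positive definite, and the nonzero eigenvalues of $\hat{\mb E}_{\mc U}^\top\hat{\mb E}_{\mc U}$ coincide with those of $\hat{\mb E}_{\mc U}\hat{\mb E}_{\mc U}^\top$.

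For a unit vector $\bs x$ in that row space, set $\bs u=\hat{\mb E}_{\mc U}\bs x\in\R^m$ and $\bs v=\mb R\bs u$. By Assumption~\ref{ass:effective}, applied since $\bs v$ lies in the effective subspace,
\begin{align*}
\alpha\|\bs v\|^2\le \bs x^\top\mb H_{h\mc U}\bs x=\bs v^\top\mb H_{s\mc U}\bs v\le\beta\|\bs v\|^2 .
\end{align*}
Next, $\mb R$ is diagonal with entries in $[r_{\min},r_{\max}]$, so
\begin{align*}
r_{\min}^2\|\bs u\|^2\le\|\bs v\|^2\le r_{\max}^2\|\bs u\|^2 .
\end{align*}
Finally, $\|\bs u\|^2=\bs x^\top\hat{\mb E}_{\mc U}^\top\hat{\mb E}_{\mc U}\bs x$ lies between $\lambda_{\min}(\hat{\mb E}_{\mc U}\hat{\mb E}_{\mc U}^\top)$ and $\lambda_{\max}(\hat{\mb E}_{\mc U}\hat{\mb E}_{\mc U}^\top)$, using the shared nonzero spectrum noted above.

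Chaining these inequalities gives
\begin{align*}
\lambda_{\max}(\mb H_{h\mc U})&\le\beta\, r_{\max}^2\,\lambda_{\max}(\hat{\mb E}_{\mc U}\hat{\mb E}_{\mc U}^\top),\\
\lambda_{\min}(\mb H_{h\mc U})&\ge\alpha\, r_{\min}^2\,\lambda_{\min}(\hat{\mb E}_{\mc U}\hat{\mb E}_{\mc U}^\top).
\end{align*}
Taking the ratio yields \eqref{eq:kappa_h}. The upper bound is routine. The lower bound requires two things: the minimum over $\bs x$ of the Rayleigh quotient must be taken on the correct subspace, and the bounds must compose monotonically. Composition is fine here because each step is a scalar inequality on norms that holds for every $\bs x$.

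The main obstacle is making the framework consistent rather than the inequalities themselves. The paper's notation mixes the projected logit space $\mc U$ with the coordinates of the $m$ effective items. The Hessian assumption must therefore be applied to $\bs v=\mb R\hat{\mb E}_{\mc U}\bs x$, which requires identifying $\mc U$ with the span of the coordinates of the $m$ effective items, so that every such $\bs v$ lies in $\mc U$. In addition, the $\lambda_{\min}$ of $\mb H_{h\mc U}$ must be interpreted as restricted to the range of $\hat{\mb E}_{\mc U}^\top$; otherwise, when $m<d$, it is trivially zero. I would state these conventions explicitly at the start of the proof and then carry out the Rayleigh-quotient chain above.
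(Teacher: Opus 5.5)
Your argument is correct and is essentially the paper's proof written pointwise. The paper chains the Loewner sandwiches $\alpha\mb E_{\mc U}^\top\mb E_{\mc U}\preceq\mb H_{h\mc U}\preceq\beta\mb E_{\mc U}^\top\mb E_{\mc U}$ and $r_{\min}^2\hat{\mb E}_{\mc U}^\top\hat{\mb E}_{\mc U}\preceq\mb E_{\mc U}^\top\mb E_{\mc U}\preceq r_{\max}^2\hat{\mb E}_{\mc U}^\top\hat{\mb E}_{\mc U}$ (your Rayleigh-quotient inequalities holding for every $\bs x$), then passes from $\hat{\mb E}_{\mc U}^\top\hat{\mb E}_{\mc U}$ to $\hat{\mb E}_{\mc U}\hat{\mb E}_{\mc U}^\top$ with its shared-spectrum lemma. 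Your restriction to the row space of $\hat{\mb E}_{\mc U}$, with linearly independent $\hat{\bs e}_i$, is if anything cleaner: the paper instead assumes the $d\times d$ Gram matrices are positive definite, and its lemma only matches nonzero eigenvalues, so its identification of the two condition numbers is exact only when $m=d$.
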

    \vspace{-1.0em}
The proof is given in Appendix~\ref{proof:norm_disparity}.
This theorem forms the foundation for our entire analysis, demonstrating that the upper bound of the Hessian matrix's condition number is jointly determined by two key factors:
\begin{enumerate}[leftmargin=*]
\item the disparity of the magnitudes of item embeddings, i.e., $(r_{\max}/r_{\min})^2$,
\item the condition number of the cosine similarity matrix for effective item embeddings, i.e., $\kappa(\hat{\mb E}_{\mc U}\hat{\mb E}_{\mc U}^\top)$.
\end{enumerate}
Unfortunately, injecting LLM representations into item embeddings fatally undermines both factors, as detailed below.

\textbf{(i) Norm-driven Ill-conditioning}

Firstly, our theorem demonstrates that the norm disparity across items, measured by $r_{\max}/r_{\min}$, deteriorates the upper bound of $\kappa(\mb H_{h\mc U})$ in a quadratic form. In practice, we observe significant disparity in the norms of LLM-derived item embeddings across different items. Taking LLM2Rec~\cite{he2025llm2rec}, which we currently observe to be the best-performing baseline method, as an example, Figure~\ref{fig:norms} displays the norms of all items' initial embeddings on three datasets. It can be observed that, despite slightly different distributions across datasets, they all unequivocally point to the same conclusion. That is, the magnitudes of initial embeddings for different items show substantial variation, with a significant disparity between the maximum and minimum. This observation diverges significantly from our theoretical pursuits, hindering subsequent model optimization.

\begin{figure}
\centering
  \includegraphics[width=\linewidth]{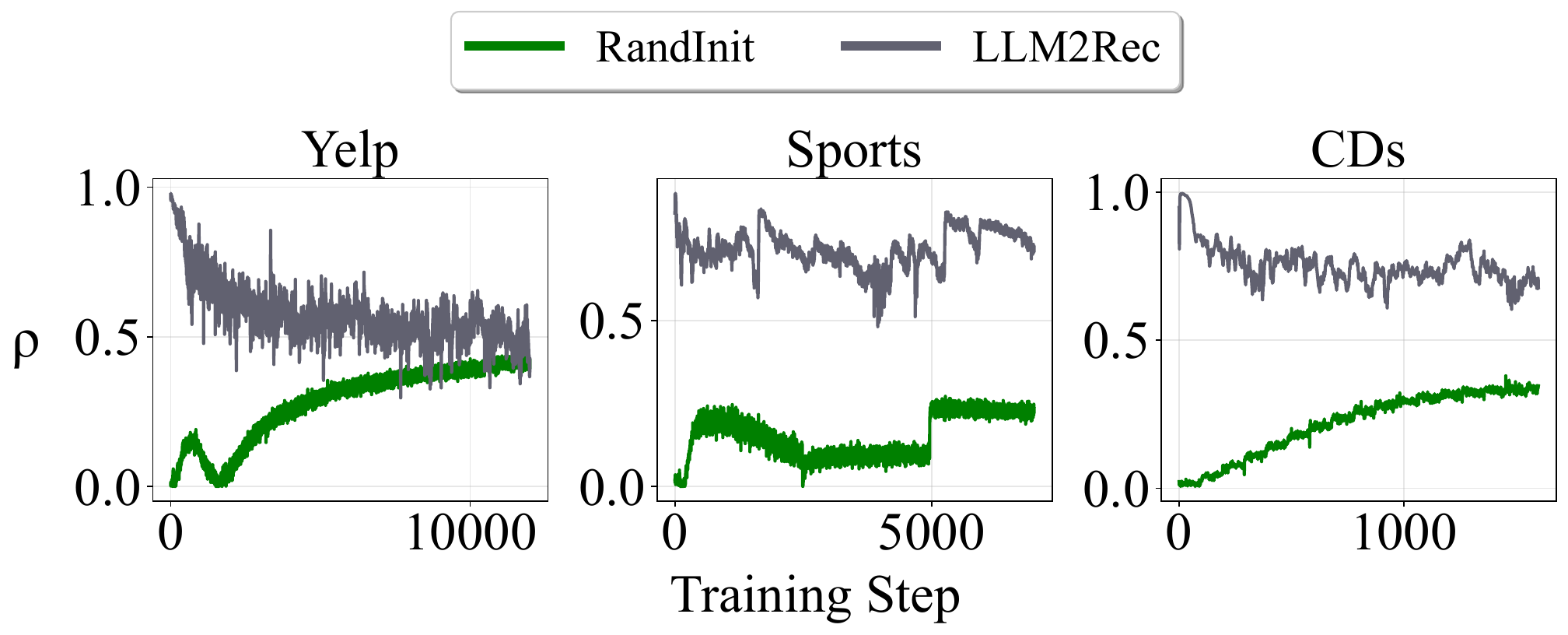}
    \vspace{-1.5em}
  \caption{Training curves of $\rho$ on the training set for randomly initialized model and LLM2Rec across three datasets on SASRec.}
    \vspace{-1.0em}
  \label{fig:rho_method_main}
\end{figure}

\textbf{(ii) Semantic–Collaboration Misaligned Angular Clustering}

Secondly, our theory shows that the larger the condition number of the cosine similarity matrix for item embeddings (i.e., $\kappa(\hat{\mb E}_{\mc U}\hat{\mb E}_{\mc U}^\top)$), the greater the upper bound on the Hessian matrix's condition number. Furthermore, in Theorem~\ref{theorem:norm} and Proposition~\ref{theorem:rho_gcn} of subsequent sections, we prove that $\kappa(\hat{\mb E}_{\mc U}\hat{\mb E}_{\mc U}^\top)$ increases with the maximum similarity between effective items. However, in practice, LLMs contain abundant semantic information unrelated to the recommendation task~\cite{he2025llm2rec,hu2025alphafuse,hu2024enhancing}. This reduces the distinguishability among items crucial for recommendation, significantly amplifying their similarity. Figure~\ref{fig:rho_method_main} illustrates the maximum similarity among effective items throughout the training process for LLM2Rec~\cite{he2025llm2rec} and randomly initialized SASRec. The results suggest that the random initialization is consistently lower than LLM2Rec. It is important to note that the collaborative supervised fine-tuning employed in LLM2Rec already tried to adapt LLM representations to the recommendation task, making it more suitable for recommendation than other LLM-enhanced methods. This result indicates that even when training tasks are carefully designed to adapt LLMs for recommendation, derived item embeddings may still prove unsuitable for recommendation training. This can occur due to factors such as insufficient training or dilution during subsequent representation transformation processes.

\begin{figure}
\centering
  \includegraphics[width=\linewidth]{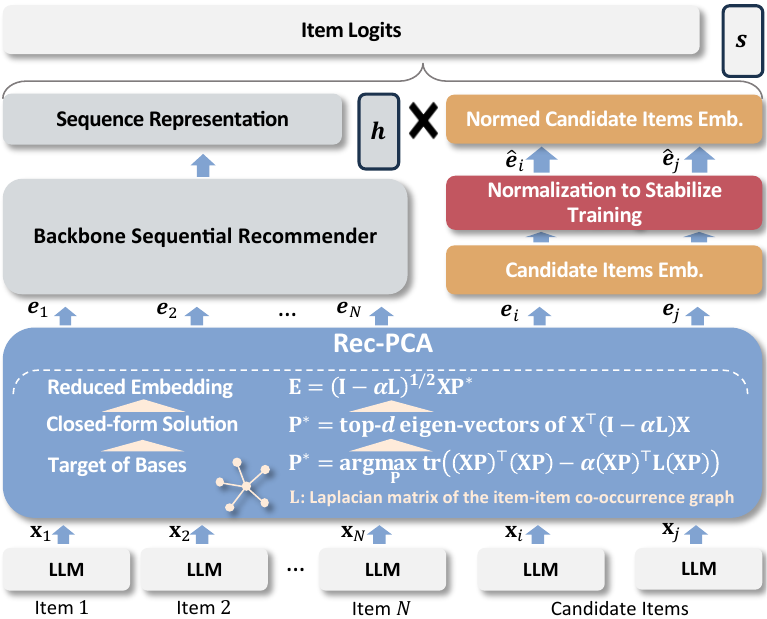}
    \vspace{-1.5em}
  \caption{Framework diagram of our method. It includes two key components: item embedding normalization and Rec-PCA.}
  \label{fig:method}
    \vspace{-1.5em}
\end{figure}

\section{Method}

\subsection{Overview}\label{sec:overview}

Building on the two optimization obstacles identified in the previous section, Figure~\ref{fig:method} summarizes our training-friendly LLM-enhanced framework (read from bottom to top). We first extract an item-level LLM representation $\mathbf{x}_i$ from each item’s textual attributes. The extractor can be any method that produces $\mathbf{x}_i$, making the framework model-agnostic to the choice of LLM or enhancement strategy.

Next, we reduce $\mathbf{x}_i$ via \emph{Rec-PCA} (Section~\ref{sec:seqpca}) to obtain the item embedding $\bs{e}_i$, which is used to initialize the embedding layer of a backbone sequential recommender. Then it produces the sequence representation $\bs h$ and is trained with the recommendation loss (Cross-Entropy). Importantly, during training we compute logits with \emph{normalized candidate item embeddings} $\hat{\bs e}_i$ (Section~\ref{sec:norm}) to prevent norm disparity.

\subsection{Item Embedding Normalization is Necessary for Training LLM-Enhanced Model}\label{sec:norm}

As we have demonstrated, the upper bound of the Hessian matrix's condition number exhibits a quadratic positive correlation with the norm disparity in item embeddings. Existing LLM-enhanced methods introduce significant magnitude disparity into the initial item embeddings by injecting LLM representations, making it challenging to optimize the recommendation model. To address this, we emphasize the need for item embedding normalization, which is neglected by all existing LLM-enhanced methods. Specifically, all LLM-enhanced methods simply inject LLM representations into item embeddings and then proceed without further adjustment. They assume that the backbone's default training strategy can effectively handle models that differ significantly from randomly initialized ones. However, this is clearly erroneous. The substantial change to the model's embedding layer requires careful reconsideration, specifically the normalization of item embeddings.

To justify it, we investigate the impact of magnitude disparity in item embeddings in isolation. To decouple this effect from other factors, we assume that $\kappa(\hat{\mb E}_{\mc U}\hat{\mb E}_{\mc U}^\top)$ is a finite constant in this section.
Then, we can derive the following proposition for the normalized setting from Theorem~\ref{theorem:norm_disparity}.
\begin{proposition}[Conditioning in Normalized Setting]
    Assume that $\kappa(\hat{\mb E}_{\mc U}\hat{\mb E}_{\mc U}^\top)$ is finite. Then, $\kappa(\mb H_{h\mc U})$ is finite if item embeddings are normalized. The upper bound is given by
    \vspace{-0.5em}
    \begin{align}
        \kappa(\mb H_{h\mc U})\le \frac{\beta}{\alpha}\cdot\kappa(\hat{\mb E}_{\mc U}\hat{\mb E}_{\mc U}^\top).
    \end{align}
    \vspace{-1.0em}
\end{proposition}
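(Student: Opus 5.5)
The plan is to obtain the proposition as a direct specialization of Theorem~\ref{theorem:norm_disparity}. When item embeddings are normalized, the logits are computed with $\hat{\bs e}_i$ in place of $\bs e_i$. The matrix entering the Hessian is then $\hat{\mb E}_{\mc U}$, so $\mb H_{h\mc U}=\hat{\mb E}_{\mc U}^\top\mb H_{s\mc U}\hat{\mb E}_{\mc U}$. Every row of $\hat{\mb E}_{\mc U}$ has unit norm, which gives $r_{\max}=r_{\min}=1$ and makes the factor $(r_{\max}/r_{\min})^2$ in Eq.~(\ref{eq:kappa_h}) equal to $1$. Substituting this into the theorem yields the stated bound $\kappa(\mb H_{h\mc U})\le (\beta/\alpha)\,\kappa(\hat{\mb E}_{\mc U}\hat{\mb E}_{\mc U}^\top)$.

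Finiteness then follows immediately. Assumption~\ref{ass:effective} gives $0<\alpha\le\beta<\infty$, so $\beta/\alpha$ is a finite constant, and $\kappa(\hat{\mb E}_{\mc U}\hat{\mb E}_{\mc U}^\top)$ is finite by hypothesis. The product is therefore finite.

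To keep the argument self-contained, I would also sketch why Theorem~\ref{theorem:norm_disparity} applies cleanly in this case.
\begin{itemize}
\item \emph{Upper bound on $\lambda_{\max}$.} For a unit vector $\bs u$, set $\bs v=\hat{\mb E}_{\mc U}\bs u$. Then $\bs u^\top\mb H_{h\mc U}\bs u=\bs v^\top\mb H_{s\mc U}\bs v$, which lies in $[\alpha\|\bs v\|^2,\beta\|\bs v\|^2]$ by Assumption~\ref{ass:effective}. Since $\|\bs v\|^2=\bs u^\top\hat{\mb E}_{\mc U}^\top\hat{\mb E}_{\mc U}\bs u$, we get $\lambda_{\max}(\mb H_{h\mc U})\le\beta\,\lambda_{\max}(\hat{\mb E}_{\mc U}^\top\hat{\mb E}_{\mc U})$.
\item \emph{Lower bound on $\lambda_{\min}$.} By the same quadratic-form argument, restricted to the relevant subspace, $\lambda_{\min}(\mb H_{h\mc U})\ge\alpha\,\lambda_{\min}(\hat{\mb E}_{\mc U}^\top\hat{\mb E}_{\mc U})$.
\item \emph{Passing to the Gram matrix.} The nonzero spectra of $\hat{\mb E}_{\mc U}^\top\hat{\mb E}_{\mc U}$ and $\hat{\mb E}_{\mc U}\hat{\mb E}_{\mc U}^\top$ coincide. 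Taking the ratio of the two bounds gives the claim.
\end{itemize}

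The main obstacle is this last step: relating the minimum eigenvalue of the $d\times d$ matrix to that of the $m\times m$ cosine-similarity matrix. The natural fix is to restrict $\bs u$ to the row space of $\hat{\mb E}_{\mc U}$, so that only nonzero eigenvalues are compared. This is implicit in the definition of $\kappa$ on the effective subspace. I would adopt the same convention as in the proof of Theorem~\ref{theorem:norm_disparity} rather than re-derive it.

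A second point to verify is that normalization does not invalidate Assumption~\ref{ass:effective}. The assumption constrains only $\mb H_s$ on $\mc U$, so I take it as holding for the normalized logits as well.
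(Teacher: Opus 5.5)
Your proposal is correct and follows the paper's route: the paper obtains this proposition directly from Theorem~\ref{theorem:norm_disparity} by setting $r_{\max}=r_{\min}=1$ for normalized embeddings, and your self-contained sketch reproduces the argument of Lemma~\ref{lemma:kappa} followed by Lemma~\ref{lemma:eigen}. Your caveat is also accurate: only the nonzero eigenvalues of the $d\times d$ and $m\times m$ Gram matrices coincide, which requires restricting to the row space. The paper's Lemma~\ref{lemma:eigen} relies on the same implicit convention without stating it.
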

It demonstrates that, provided other factors do not deteriorate arbitrarily, performing item embedding normalization can ensure a bounded condition number.

\begin{figure}
\centering
  \includegraphics[width=\linewidth]{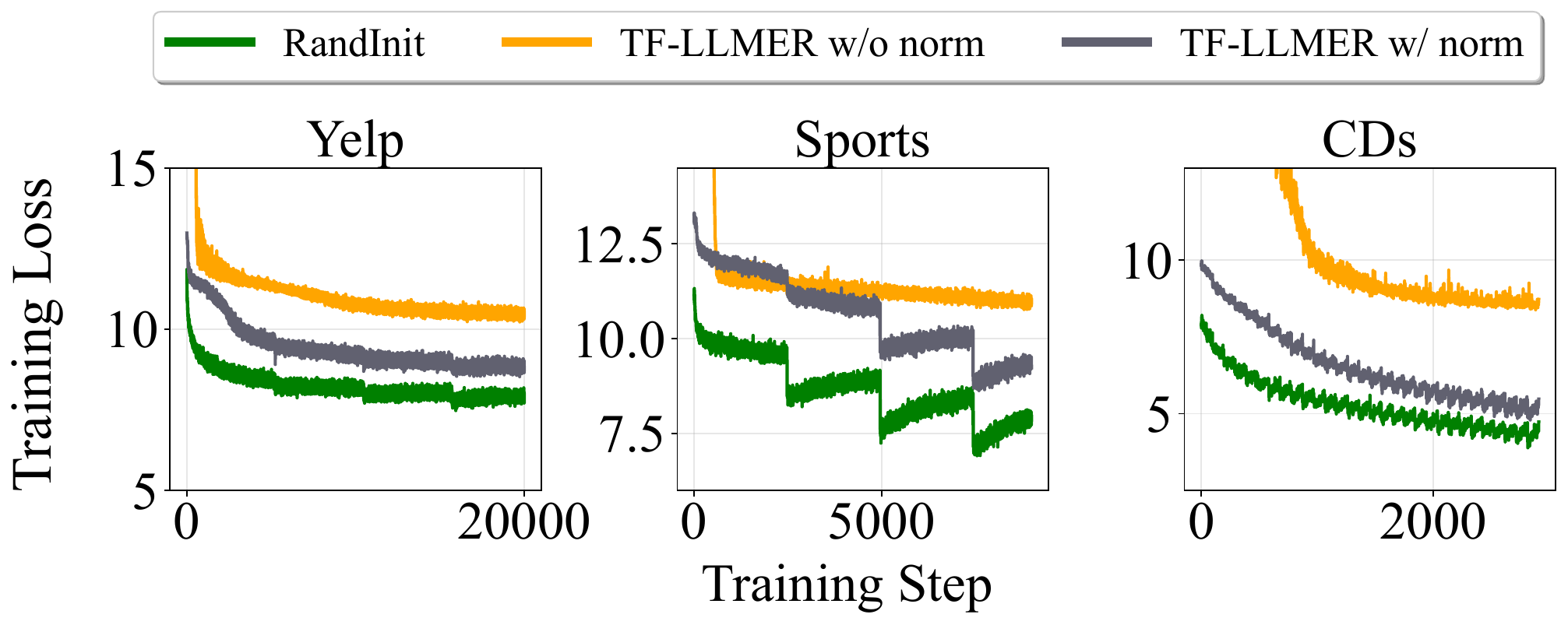}
    \vspace{-1.5em}
  \caption{Training loss of our method without and with normalization. The backbone is SASRec. Results on other backbones are given in Figure~\ref{fig:loss_all} in the appendix.}
  \label{fig:loss_main_norm}
    \vspace{-1.0em}
\end{figure}

In contrast, without item embedding normalization, even if all other factors are bounded, the enormous disparity in embedding norms can still lead to an arbitrarily poor condition number. Specifically, we first present the following theorem, which provides a lower bound on the Hessian matrix's condition number in the unnormalized setting.
\begin{theorem}\label{theorem:unnorm}
The Hessian matrix's condition number is lower bounded by the condition number of the unnormalized similarity matrix. Formally, 
    \vspace{-0.5em}
\begin{align}
    \kappa(\mb H_{h\mc U})\ge \frac{\alpha}{\beta}\cdot\kappa(\mb E_{\mc U}\mb E_{\mc U}^\top).
\end{align}
\end{theorem}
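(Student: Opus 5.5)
Throughout, $\mb E_{\mc U}$ is the $m\times d$ matrix whose rows are the effective item embeddings $\bs e_i$. Then $\mb H_{h\mc U}=\mb E_{\mc U}^\top\mb H_{s\mc U}\mb E_{\mc U}$ is $d\times d$, while $\mb E_{\mc U}\mb E_{\mc U}^\top$ is $m\times m$. As in Theorem~\ref{theorem:norm_disparity}, we work in the regime where both condition numbers are defined. That means $\mb E_{\mc U}$ has full rank, and we read $\kappa$ as a ratio of nonzero eigenvalues on the relevant range. By Assumption~\ref{ass:effective}, $\alpha\mb I\preceq\mb H_{s\mc U}\preceq\beta\mb I$ on $\mc U$.

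The plan is a Rayleigh-quotient sandwich. We bound $\lambda_{\max}(\mb H_{h\mc U})$ from below and $\lambda_{\min}(\mb H_{h\mc U})$ from above, each in terms of the Gram matrix.

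For any $\bs v$ in the relevant domain, set $\bs u=\mb E_{\mc U}\bs v\in\mc U$. Then
\begin{align*}
\bs v^\top\mb H_{h\mc U}\bs v=\bs u^\top\mb H_{s\mc U}\bs u\in\bigl[\alpha\|\mb E_{\mc U}\bs v\|^2,\ \beta\|\mb E_{\mc U}\bs v\|^2\bigr],
\end{align*}
and $\|\mb E_{\mc U}\bs v\|^2=\bs v^\top\mb E_{\mc U}^\top\mb E_{\mc U}\bs v$. This is the same Loewner sandwich that drives Theorem~\ref{theorem:norm_disparity}, so we reuse it:
\begin{align*}
\alpha\,\mb E_{\mc U}^\top\mb E_{\mc U}\preceq\mb H_{h\mc U}\preceq\beta\,\mb E_{\mc U}^\top\mb E_{\mc U}.
\end{align*}

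Next we take extremal eigenvalues on both sides, which is monotone under the Loewner order. This gives
\begin{align*}
\lambda_{\max}(\mb H_{h\mc U})\ge\alpha\,\lambda_{\max}(\mb E_{\mc U}^\top\mb E_{\mc U}),\qquad
\lambda_{\min}(\mb H_{h\mc U})\le\beta\,\lambda_{\min}(\mb E_{\mc U}^\top\mb E_{\mc U}).
\end{align*}
Dividing, we get $\kappa(\mb H_{h\mc U})\ge(\alpha/\beta)\,\kappa(\mb E_{\mc U}^\top\mb E_{\mc U})$.

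The last step transfers this to the $m\times m$ Gram matrix. The matrices $\mb E_{\mc U}^\top\mb E_{\mc U}$ and $\mb E_{\mc U}\mb E_{\mc U}^\top$ share their nonzero eigenvalues, namely the squared singular values of $\mb E_{\mc U}$. Hence their condition numbers, taken over nonzero spectra, coincide, and the claimed bound follows.

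The main obstacle is this dimensional bookkeeping. If $m\neq d$, one of the two Gram matrices is singular, so $\kappa$ must be read on its range, i.e.\ as $\sigma_{\max}^2/\sigma_{\min}^2$ of $\mb E_{\mc U}$. We must also check that the upper bound on $\lambda_{\min}(\mb H_{h\mc U})$ is taken over the same subspace. I would handle both points by restricting $\bs v$ to the row space of $\mb E_{\mc U}$ and stating $\kappa$ as the ratio of extreme nonzero singular values. This is consistent with how Theorem~\ref{theorem:norm_disparity} treats $\kappa(\hat{\mb E}_{\mc U}\hat{\mb E}_{\mc U}^\top)$.

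Afterwards, I would remark what the bound shows. The Gram matrix satisfies $\lambda_{\max}(\mb E_{\mc U}\mb E_{\mc U}^\top)\ge r_{\max}^2$, because its diagonal entries are $\|\bs e_i\|^2$. Likewise $\lambda_{\min}(\mb E_{\mc U}\mb E_{\mc U}^\top)\le r_{\min}^2$. Therefore $\kappa(\mb E_{\mc U}\mb E_{\mc U}^\top)\ge(r_{\max}/r_{\min})^2$, so without normalization the Hessian's condition number can be made arbitrarily large by norm disparity alone.
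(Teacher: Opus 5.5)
Your proof is correct and follows essentially the same route as the paper: both rest on the two-sided comparison $\alpha\,\mb E_{\mc U}^\top\mb E_{\mc U}\preceq\mb H_{h\mc U}\preceq\beta\,\mb E_{\mc U}^\top\mb E_{\mc U}$, pairing the lower bound with $\lambda_{\max}$ and the upper bound with $\lambda_{\min}$ (the reverse of the pairing used for Theorem~\ref{theorem:norm_disparity}). The paper phrases this in square-root form, writing $\mb H_{h\mc U}=(\mb S\mb E_{\mc U})^\top(\mb S\mb E_{\mc U})$ with $\mb S=\mb H_{s\mc U}^{1/2}$ and comparing $\sigma_{\max}$ and $\sigma_{\min}$ of $\mb S\mb E_{\mc U}$ and $\mb E_{\mc U}$ via $\sqrt{\alpha}\|\mb E_{\mc U}\bs x\|\le\|\mb S\mb E_{\mc U}\bs x\|\le\sqrt{\beta}\|\mb E_{\mc U}\bs x\|$; it does not address the $m\neq d$ bookkeeping that you handle explicitly by restricting to the row space.
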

    \vspace{-1.0em}
The proof is given in Appendix~\ref{proof:unnorm}. Note that $\mb E_{\mc U}\mb E_{\mc U}^\top$ here refers to the similarity matrix computed using non-normalized item embeddings.
Based on the lower bound provided in this theorem, we have the following proposition.
\begin{proposition}[Conditioning in Unnormalized Setting]\label{prop:unnorm}
    Assume that $\kappa(\hat{\mb E}_{\mc U}\hat{\mb E}_{\mc U}^\top)$ is finite. There is no upper bound on $\kappa(\mb H_{h\mc U})$ if norms of item embeddings are unconstrained.
\end{proposition}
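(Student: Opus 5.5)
The plan is to combine Theorem~\ref{theorem:unnorm} with an explicit family of item embeddings. In this family the normalized similarity matrix stays fixed, while the norms are made arbitrarily disparate. By Theorem~\ref{theorem:unnorm}, $\kappa(\mb H_{h\mc U})\ge\frac{\alpha}{\beta}\kappa(\mb E_{\mc U}\mb E_{\mc U}^\top)$, and $\alpha/\beta>0$ is fixed by Assumption~\ref{ass:effective}. So it suffices to show that $\kappa(\mb E_{\mc U}\mb E_{\mc U}^\top)$ can be made arbitrarily large while $\kappa(\hat{\mb E}_{\mc U}\hat{\mb E}_{\mc U}^\top)$ stays finite.

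Step 1 (factorization). Write $\mb E_{\mc U}=\mb R\hat{\mb E}_{\mc U}$, where $\mb R=\mathrm{Diag}(\|\bs e_1\|,\dots,\|\bs e_m\|)$. Then
\begin{align*}
\mb E_{\mc U}\mb E_{\mc U}^\top=\mb R\,\mb C\,\mb R,\qquad \mb C:=\hat{\mb E}_{\mc U}\hat{\mb E}_{\mc U}^\top .
\end{align*}
Here $\mb C$ is fixed and positive definite with finite $\kappa(\mb C)$, by the hypothesis.

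Step 2 (construction). Fix the directions $\hat{\bs e}_i$. Take norms $\|\bs e_1\|=t$ and $\|\bs e_i\|=1$ for $i\ge 2$, and send $t\to\infty$. The largest eigenvalue satisfies $\lambda_{\max}(\mb R\mb C\mb R)\ge \bs u_1^\top\mb R\mb C\mb R\bs u_1=t^2 C_{11}=t^2$, using the first standard basis vector $\bs u_1$. For the smallest eigenvalue, use the test vector $\bs w=\mb R^{-1}\bs u_2/\|\mb R^{-1}\bs u_2\|=\bs u_2$. This gives $\lambda_{\min}\le C_{22}=1$, which only yields $\kappa\ge t^2$. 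That alone is enough: $\kappa(\mb E_{\mc U}\mb E_{\mc U}^\top)\ge t^2\to\infty$. If a tighter statement is wanted, we can instead argue directly that
\begin{align*}
\kappa(\mb R\mb C\mb R)\ge\kappa(\mb R)^2/\kappa(\mb C)=\frac{(r_{\max}/r_{\min})^2}{\kappa(\mb C)}.
\end{align*}
This follows from $\lambda_{\max}(\mb R\mb C\mb R)\ge\lambda_{\min}(\mb C)r_{\max}^2$ and $\lambda_{\min}(\mb R\mb C\mb R)\le\lambda_{\max}(\mb C)r_{\min}^2$, both obtained by Rayleigh quotients with $\mb R^{-1}\bs u_i$ normalized. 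This matches the upper bound in Theorem~\ref{theorem:norm_disparity} up to the $\kappa(\mb C)$ factors. It shows directly that norm disparity alone drives the lower bound to infinity.

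Step 3 (conclusion). Combining the two steps gives $\kappa(\mb H_{h\mc U})\ge\frac{\alpha}{\beta}\frac{(r_{\max}/r_{\min})^2}{\kappa(\mb C)}$. The right-hand side is unbounded as $r_{\max}/r_{\min}\to\infty$ with $\mb C$ fixed, so no finite upper bound exists when norms are unconstrained.

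The main subtlety is the status of $\alpha$ and $\beta$. They must not degrade with the norms: the constants in Assumption~\ref{ass:effective} are regarded as fixed, and $\mb H_s$ depends on the logits, which change with the norms. I would handle this by reading the assumption as holding uniformly over the considered configurations. Alternatively, I would note that the proposition only asserts the absence of any norm-independent upper bound, so fixing $\alpha/\beta$ along the constructed family suffices.
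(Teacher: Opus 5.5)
Your proof is correct and follows the paper's route: you use Theorem~\ref{theorem:unnorm} to reduce the claim to showing that $\kappa(\mb E_{\mc U}\mb E_{\mc U}^\top)$ is unbounded while $\hat{\mb E}_{\mc U}\hat{\mb E}_{\mc U}^\top$ stays fixed, and you get this by sending one embedding's norm to infinity; the paper uses the special case $\bs e_1=(R,0)^\top$, $\bs e_2=(0,1)^\top$ with $\kappa=R^2$. Your diagonal test-vector argument extends this to arbitrary fixed directions and any $m\ge 2$ (and, since $(\mb R\mb C\mb R)_{ii}=r_i^2$, it gives $\kappa(\mb E_{\mc U}\mb E_{\mc U}^\top)\ge (r_{\max}/r_{\min})^2$ without the $\kappa(\mb C)$ factor), and your caveat that $\alpha/\beta$ must be held fixed along the family makes explicit an assumption the paper uses without stating it.
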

The proof is in Appendix~\ref{proof:prop_unnorm}. It shows that we cannot theoretically bound the conditioning without normalization.

Therefore, it is foreseeable that the training difficulty will be significantly greater than when normalization is applied. To empirically validate this, Figure~\ref{fig:loss_main_norm} shows the loss curve throughout the entire training process with and without normalization. We observe that without normalization, the loss starts extremely high and its decline ceases very quickly. It also ultimately stabilizes at a level substantially higher than that achieved with normalization. This indicates that the extreme disparity in norms within the initial item embeddings significantly impedes continued training. And, as predicted by our theory, normalization effectively resolves this issue.


\begin{figure}
\centering
  \includegraphics[width=\linewidth]{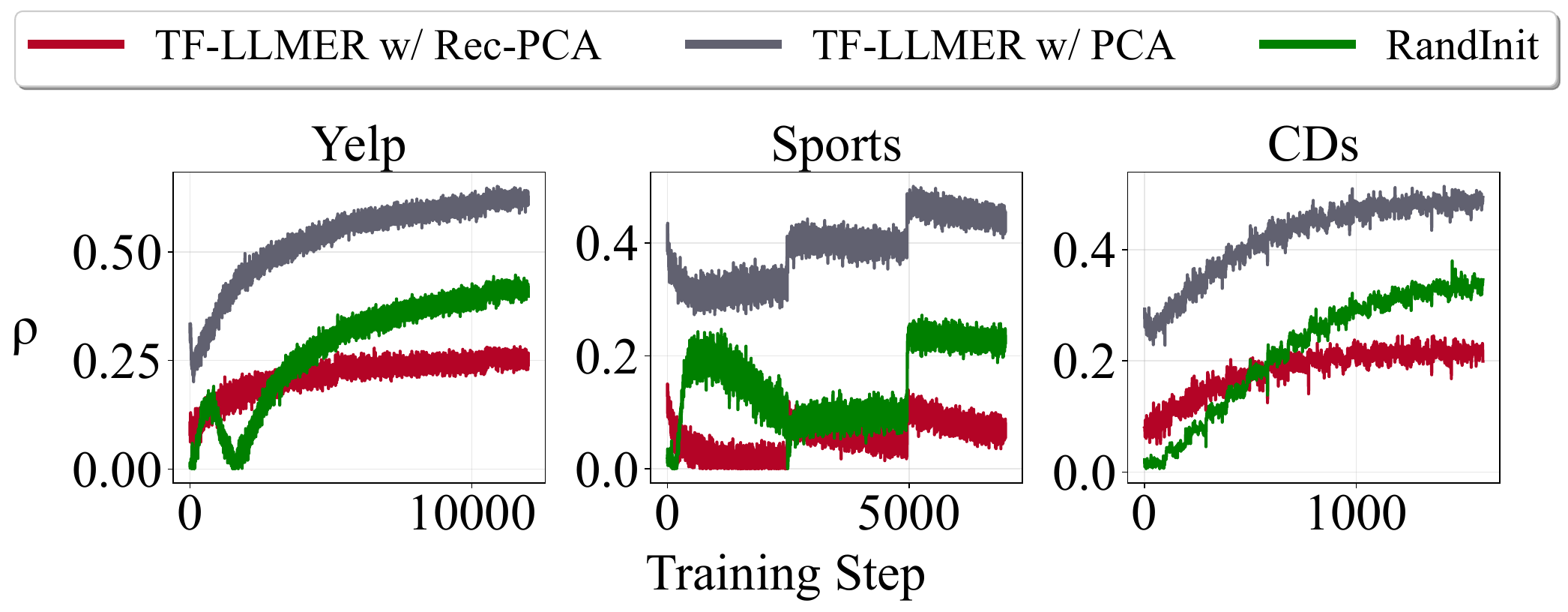}
    \vspace{-1.5em}
  \caption{Training curves of $\rho$ on the training set for our method with Rec-PCA and vanilla PCA, and random initialization. The backbone is SASRec. Full results are in Figure~\ref{fig:rho_all} in the appendix.}
    \vspace{-1.0em}
  \label{fig:rho_main}
\end{figure}

\begin{figure}
\centering
  \includegraphics[width=\linewidth]{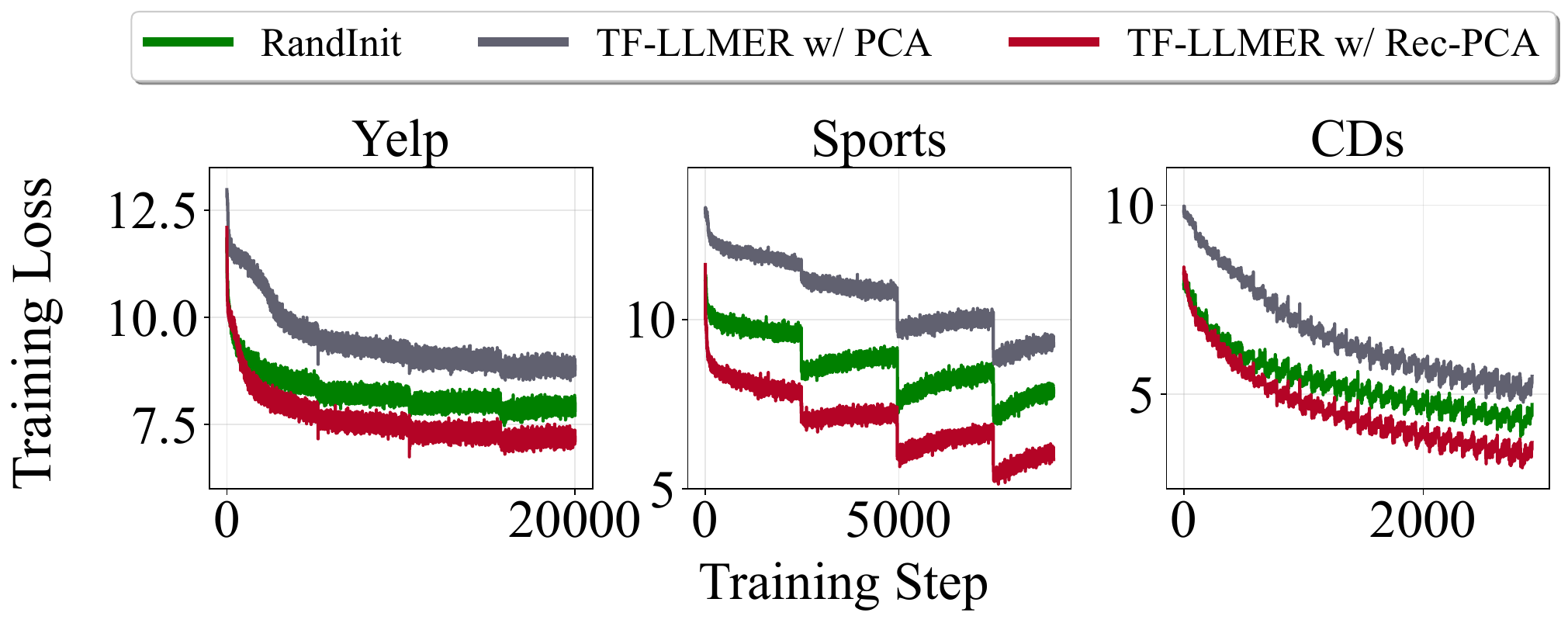}
    \vspace{-1.5em}
  \caption{Training loss of random initialization, our method with Rec-PCA and vanilla PCA. The backbone is SASRec. Full results are in Figure~\ref{fig:loss_all} in the appendix.}
  \label{fig:loss_main}
    \vspace{-1.5em}
\end{figure}

\subsection{Rec-PCA as an Efficient and Effective Method for LLM Representation Adaptation}\label{sec:seqpca}

Another factor affecting the Hessian matrix's condition number is semantic–collaboration misaligned angular clustering.
It arises from the mismatch between the semantic relevance required by LLMs' pre-training tasks and the collaborative relevance demanded by recommendation tasks.
To address this issue, we propose Rec-PCA, a recommendation-aware dimensionality reduction method that only requires one-shot preprocessing.
Specifically, it aims to achieve the following two objectives simultaneously when transforming LLM representations into the initialization of item embeddings: 1) maximizing the information content retained in transformed representations, and 2) maximizing the suitability of transformed representations for recommendation tasks.

For the first objective, we align with standard PCA by maximizing the variance of each dimension after transformation while maintaining zero covariance between dimensions. Let $\mb X\in\R^{|\mc I|\times d_{\text{LLM}}}$ be the representation matrix encoded by LLM and $\mb P\in\R^{d_{\text{LLM}}\times d}$ denote the transformation matrix, where $d_{\text{LLM}}$ is LLM representations' dimension. The first objective is maximizing: $M_1=\text{tr}\left((\mb X\mb P)^\top(\mb X\mb P)\right)$.

For the second objective, we first construct an item-item co-occurrence graph that captures the collaborative structure from the recommendation task. Then we minimize the total variation of the transformed representations on the graph to match it.
Specifically, we construct the item-item co-occurrence graph $\mc G$ by linking adjacent items within all historical interaction sequences. Then the second objective is minimizing: $M_2=TV(\mb X\mb P;\mc G)=\text{tr}\left((\mb X\mb P)^\top\mb L(\mb X\mb P)\right)$, where $\mb L$ denotes the Laplacian matrix of $\mc G$.

Then, we combine these two objectives through the hyperparameter $\alpha$, meaning our final maximization objective is
    \vspace{-1.0em}
\begin{align}\label{eq:target}
    M=M_1-\alpha M_2=\text{tr}(\mb P^\top(\mb X^\top(\mb I-\alpha\mb L)\mb X)\mb P).
\end{align}
    \vspace{-1.5em}

Therefore, our goal is to obtain $\mb P^*$ such that $\mb P^\top(\mb X^\top(\mb I-\alpha\mb L)\mb X)\mb P$ is diagonal and maximize the values on the diagonal. We can obtain the closed-form solution.
Concretely, $\mb P^*$ consists of the top-$d$ eigenvectors of $\mb X^\top(\mb I-\alpha\mb L)\mb X$.
And the new data is computed as $\mb E=(\mb I-\alpha\mb L)^{1/2}\mb X\mb P^*$. The derivation of this process is in Appendix~\ref{app:pca_derive}.
Note that we have to compute $(\mb I-\alpha\mb L)^{1/2}$ to obtain the new data. To avoid this computational load, we approximate it using the Chebyshev expansion as in GCNs~\cite{jiang2019semi}. Appendix~\ref{proof:approx} gives the detailed approximation results.

Finally, the entire process of our Rec-PCA consists of:
\newline
1. Compute $\mb P^*$ as the top-$d$ eigenvectors of $\mb X^\top(\mb I-\alpha\mb L)\mb X$.
\newline
2. Obtain the new data as $\mb E=(\mb I-\alpha\mb L)^{1/2}\mb X\mb P^*$, where $(\mb I-\alpha\mb L)^{1/2}$ is approximated using the Chebyshev expansion.
\newline
3. Use $\mb E$ as the initialization of the item embeddings in backbone sequential recommendation models.

\begin{table*}[!t]
    \caption{Recommendation performance. The best results are in boldface, and the best baselines are underlined. \textit{Improv.b} denotes the relative improvement of \ourmethod over the best baseline. A paired t-test is performed across 5 independent runs to evaluate the $p$-value ($\leq 0.05$ indicates statistical significance).}
    \label{tab:all}
    \vspace{-0.5em}
    \centering
    \resizebox{\linewidth}{!}{
    \begin{tabular}{cc|cccc|cccc|cccc} \toprule
        \multirow{2}{*}{Backbone} & \multirow{2}{*}{Method} & \multicolumn{4}{c|}{Yelp} & \multicolumn{4}{c|}{Sports} & \multicolumn{4}{c}{CDs}\\
         & & H@5 & N@5 & H@10 & N@10 & H@5 & N@5 & H@10 & N@10 & H@5 & N@5 & H@10 & N@10\\
        \midrule
        \multirow{8}{*}{GRU4Rec} & None & 0.0184 & 0.0131& 0.0277 & 0.0176 & 0.0089 & 0.0080 & 0.0092 & 0.0084 & 0.0964 & 0.0789 & 0.1057 & 0.0805\\
         & LLMInit & 0.0197 & 0.0140 & 0.0281 & 0.0182 & 0.0091 & 0.0083 & 0.0095 & 0.0086 & 0.0973 & 0.0792 & 0.1070 & 0.0811\\
         & LLM-ESR & 0.0203 & 0.0156 & 0.0294 & 0.0192 & 0.0094 & 0.0084 & \underline{0.0099} & 0.0089 & \underline{0.0995} & \underline{0.0807} & \underline{0.1104} & \underline{0.0832}\\
         & LLMEmb & 0.0199 & 0.0146 & 0.0289 & 0.0188 & 0.0096 & 0.0085 & 0.0095 & 0.0087 & 0.0977 & 0.0797 & 0.1081 & 0.0815\\
         & LLM2Rec & \underline{0.0208} & \underline{0.0160} & \underline{0.0300} & \underline{0.0197} & \underline{0.0100} & \underline{0.0088} & 0.0099 & \underline{0.0091} & 0.0990 & 0.0801 & 0.1093 & 0.0826\\
         & \ourmethod & \textbf{0.0228} & \textbf{0.0171} & \textbf{0.0324} & \textbf{0.0211} & \textbf{0.0108} & \textbf{0.0096} & \textbf{0.0105} & \textbf{0.0099} & \textbf{0.1047} & \textbf{0.0843} & \textbf{0.1163} & \textbf{0.0868}\\
         \cline{2-14}
         & \textit{Improv.b} & 9.62\% & 6.88\% & 8.00\% & 7.11\% & 8.00\% & 9.10\% & 6.06\% & 8.79\% & 5.23\% & 4.46\% & 5.34\% & 4.33\%\\
         & \textit{p-value} & 3.7e-4 & 5.2e-3 & 6.6e-4 & 8.1e-4 & 7.73e-4 & 6.26e-3 & 9.01e-4 & 5.29e-3 & 4.33e-3 & 5.72e-5 & 1.08e-4 & 9.37e-5\\
        \midrule
        \midrule
         \multirow{8}{*}{Bert4Rec} & None & 0.0231 & 0.0158 & 0.0389 & 0.0202 & 0.0097 & 0.0088 & 0.0109 & 0.0092 & 0.1017 & 0.0819 & 0.1130 & 0.0852\\
         & LLMInit & 0.0247 & 0.0160 & 0.0407 & 0.0209 & 0.0098 & 0.0089 & 0.0107 & 0.0093 & 0.1031 & 0.0829 & 0.1148 & 0.0868\\
         & LLM-ESR & 0.0263 & 0.0166 & 0.0419 & 0.0217 & \underline{0.0107} & \underline{0.0094} & \underline{0.0122} & \underline{0.0097} & 0.1053 & 0.0859 & 0.1166 & 0.0898\\
         & LLMEmb & 0.0271 & 0.0171 & 0.0423 & 0.0221 & 0.0099 & 0.0091 & 0.0112 & 0.0094 & 0.1046 & 0.0846 & 0.1147 & 0.0889\\
         & LLM2Rec & \underline{0.0279} & \underline{0.0180} & \underline{0.0427} & \underline{0.0233} & 0.0104 & 0.0092 & 0.0118 & 0.0095 & \underline{0.1061} & \underline{0.0871} & \underline{0.1187} & \underline{0.0914}\\
         & \ourmethod & \textbf{0.0294} & \textbf{0.0191} & \textbf{0.0453} & \textbf{0.0246} & \textbf{0.0113} & \textbf{0.0098} & \textbf{0.0132} & \textbf{0.0103} & \textbf{0.1107} & \textbf{0.0904} & \textbf{0.1244} & \textbf{0.0946}\\
         \cline{2-14}
         & \textit{Improv.b} & 5.38\% & 6.11\% & 6.09\% & 5.58\% & 5.61\% & 4.26\% & 8.20\% & 6.19\% & 4.34\% & 3.79\% & 4.80\% & 3.50\%\\
         & \textit{p-value} & 2.88e-4 & 5.21e-5 & 9.38e-5 & 4.42e-4 & 7.85e-5 & 9.29e-4 & 5.13e-5 & 4.11e-3 & 7.29e-4 & 8.31e-5 & 7.72e-4 & 6.03e-4\\
         \midrule
         \midrule
         \multirow{8}{*}{SASRec} & None & 0.0247 & 0.0163 & 0.0396 & 0.0210 & 0.0105 & 0.0093 & 0.0118 & 0.0097 & 0.1032 & 0.0838 & 0.1171 & 0.0883\\
         & LLMInit & 0.0243 & 0.0167 & 0.0401 & 0.0213 & 0.0102 & 0.0090 & 0.0111 & 0.0094 & 0.1052 & 0.0843 & 0.1189 & 0.0897\\
         & LLM-ESR & 0.0257 & 0.0179 & 0.0437 & 0.0229 & \underline{0.0111} & \underline{0.0097} & \underline{0.0129} & \underline{0.0101} & 0.1067 & 0.0864 & 0.1210 & 0.0918\\
         & LLMEmb & 0.0268 & 0.0180 & 0.0435 & 0.0236 & 0.0107 & 0.0094 & 0.0120 & 0.0098 & 0.1066 & 0.0866 & 0.1228 & 0.0924\\
         & LLM2Rec & \underline{0.0278} & \underline{0.0186} & \underline{0.0449} & \underline{0.0240} & 0.0109 & 0.0096 & 0.0123 & 0.0099 & \underline{0.1072} & \underline{0.0871} & \underline{0.1243} & \underline{0.0933}\\
         & \ourmethod & \textbf{0.0299} & \textbf{0.0197} & \textbf{0.0472} & \textbf{0.0253} & \textbf{0.0118} & \textbf{0.0101} & \textbf{0.0138} & \textbf{0.0107} & \textbf{0.1118} & \textbf{0.0913} & \textbf{0.1290} & \textbf{0.0969}\\
         \cline{2-14}
         & \textit{Improv.b} & 7.55\% & 5.91\% & 5.12\% & 5.42\% & 6.31\% & 4.12\% & 6.98\% & 5.94\% & 4.29\% & 4.82\% & 3.78\% & 3.86\%\\
         & \textit{p-value} & 2.73e-4 & 1.09e-3 & 5.98e-5 & 6.26e-4 & 7.90e-4 & 2.53e-5 & 5.72e-4 & 1.07e-3 & 5.54e-4 & 6.70e-5 & 4.37e-4 & 2.12e-4\\
        \bottomrule
    \end{tabular}
    }
    \vspace{-1.0em}
\end{table*}

\textbf{Analysis of Superiority.}
As we have discussed, Rec-PCA mitigates semantic–collaboration misaligned angular clustering (i.e., $\kappa(\hat{\mb E}_{\mc U}\hat{\mb E}_{\mc U}^\top)$), thereby reducing the upper bound of $\kappa(\mb H_{h\mc U})$ in Eq.(\ref{eq:kappa_h}). Since we cannot directly calculate $\kappa(\hat{\mb E}_{\mc U}\hat{\mb E}_{\mc U}^\top)$, we now present a series of theoretical analyses to transform measurements of $\kappa(\hat{\mb E}_{\mc U}\hat{\mb E}_{\mc U}^\top)$ into measurements of \emph{Effective Coherence}.
Specifically, we first define the effective coherence as follows.

\begin{definition}[Effective Coherence]
    Given the effective subspace $\mc U$, the effective coherence measures the maximum similarity between items within $\mc U$, which is computed as
    \vspace{-0.5em}
    \begin{align}
    \rho(\hat{\mb E}_{\mc U}):=\max_{i\neq j}\left|\braket{\hat{\bs e}_i,\hat{\bs e}_j}\right|.
\end{align}
\end{definition}
    \vspace{-1.0em}

Then, we have the following theorem that provides the upper bound for $\kappa(\hat{\mb E}_\mc U\hat{\mb E}_\mc U^\top)$ using the effective coherence.

\begin{theorem}\label{theorem:norm}
    When $(m-1)\rho(\hat{\mb E}_{\mc U})<1$, we have
    \vspace{-0.5em}
    \begin{align}
        \kappa(\hat{\mb E}_\mc U\hat{\mb E}_\mc U^\top)\le \frac{1+(m-1)\rho(\hat{\mb E}_{\mc U})}{1-(m-1)\rho(\hat{\mb E}_{\mc U})}.
    \end{align}
    \vspace{-1.5em}
\end{theorem}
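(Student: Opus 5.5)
The plan is to apply the Gershgorin circle theorem to the Gram matrix $\mb G:=\hat{\mb E}_\mc U\hat{\mb E}_\mc U^\top\in\R^{m\times m}$. This is a symmetric positive semidefinite matrix whose $(i,j)$ entry is $\braket{\hat{\bs e}_i,\hat{\bs e}_j}$. Because each $\hat{\bs e}_i$ is unit-norm, every diagonal entry equals $1$. By the definition of effective coherence, every off-diagonal entry has absolute value at most $\rho(\hat{\mb E}_{\mc U})$. Everything needed is therefore encoded in this structure: unit diagonal and off-diagonal entries bounded by $\rho$.

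The steps are as follows. First, note that the eigenvalues of $\mb G$ are real, since $\mb G$ is symmetric. Second, Gershgorin's theorem places every eigenvalue $\lambda$ in some disc centered at $\mb G_{ii}=1$ with radius $R_i=\sum_{j\neq i}|\braket{\hat{\bs e}_i,\hat{\bs e}_j}|\le (m-1)\rho(\hat{\mb E}_{\mc U})$. Hence every eigenvalue satisfies
\begin{align*}
1-(m-1)\rho(\hat{\mb E}_{\mc U})\le\lambda\le 1+(m-1)\rho(\hat{\mb E}_{\mc U}).
\end{align*}
Third, the hypothesis $(m-1)\rho(\hat{\mb E}_{\mc U})<1$ makes the lower bound strictly positive. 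This gives $\lambda_{\min}(\mb G)>0$, so $\mb G$ is positive definite and its condition number is well defined and finite. Finally, dividing the upper bound on $\lambda_{\max}$ by the lower bound on $\lambda_{\min}$ yields the stated inequality for $\kappa(\hat{\mb E}_\mc U\hat{\mb E}_\mc U^\top)$.

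There is no substantial obstacle. The only point that needs care is the interpretation of $\hat{\mb E}_\mc U\hat{\mb E}_\mc U^\top$. It has to be read as the $m\times m$ matrix of pairwise cosine similarities, with rows indexed by the $m$ effective items, and not as the $d\times d$ matrix $\hat{\mb E}_\mc U^\top\hat{\mb E}_\mc U$.

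This reading is also consistent with the bound. Under the hypothesis, $\mb G$ is nonsingular, which forces the normalized vectors to be linearly independent, so $m\le d$. The nonzero spectra of $\hat{\mb E}_\mc U\hat{\mb E}_\mc U^\top$ and $\hat{\mb E}_\mc U^\top\hat{\mb E}_\mc U$ coincide, so whichever Gram matrix is used in Theorem~\ref{theorem:norm_disparity}, its condition number restricted to the relevant $m$-dimensional range is the one bounded here.

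If a sharper statement were wanted, one could instead use the Rayleigh quotient directly. For $\bs v\in\R^m$,
\begin{align*}
\bs v^\top\mb G\bs v=\|\bs v\|^2+\sum_{i\neq j}v_iv_j\mb G_{ij},
\end{align*}
and $\sum_{i\neq j}|v_i||v_j|\le (m-1)\|\bs v\|^2$ by Cauchy--Schwarz. This gives the same two-sided bound, and it is the route I would present if avoiding citation of Gershgorin's theorem.
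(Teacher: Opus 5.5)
Your proposal is correct and is essentially the paper's proof: apply Gershgorin to the unit-diagonal cosine-similarity matrix $\hat{\mb E}_\mc U\hat{\mb E}_\mc U^\top$, whose radii are all at most $(m-1)\rho(\hat{\mb E}_{\mc U})$, use the hypothesis to make the lower endpoint positive, and divide the two eigenvalue bounds. Your caveat that $\hat{\mb E}_\mc U\hat{\mb E}_\mc U^\top$ and $\hat{\mb E}_\mc U^\top\hat{\mb E}_\mc U$ share only their nonzero spectra is more careful than the paper's unqualified appeal to Lemma~\ref{lemma:eigen}, and bounding the $m\times m$ matrix directly, as you do, avoids that issue entirely.
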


The proof is presented in Appendix~\ref{proof:norm}. By analyzing the monotonicity of the terms on the right-hand side of the inequality, we can finally derive the following proposition.
\begin{proposition}\label{theorem:rho_gcn}
    When $(m-1)\rho(\hat{\mb E}_{\mc U})<1$, $\kappa(\hat{\mb E}_\mc U\hat{\mb E}_\mc U^\top)$ increases with $\rho(\hat{\mb E}_{\mc U})$.
\end{proposition}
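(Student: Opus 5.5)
The proposition is really a statement about the upper bound of Theorem~\ref{theorem:norm}, so we read it in that sense: the quantity controlling $\kappa(\hat{\mb E}_\mc U\hat{\mb E}_\mc U^\top)$ is monotonically increasing in $\rho(\hat{\mb E}_{\mc U})$. The literal eigenvalue ratio cannot be a strictly monotone function of the scalar $\rho$ alone, since two Gram matrices with the same effective coherence can have different spectra. The plan is therefore to study the bound as a function of $\rho$ and show it increases. On the admissible range $0\le\rho<1/(m-1)$, write $t:=(m-1)\rho\in[0,1)$ and define
\begin{align*}
g(t)=\frac{1+t}{1-t}.
\end{align*}
By Theorem~\ref{theorem:norm}, $\kappa(\hat{\mb E}_\mc U\hat{\mb E}_\mc U^\top)\le g\bigl((m-1)\rho(\hat{\mb E}_{\mc U})\bigr)$ whenever $(m-1)\rho(\hat{\mb E}_{\mc U})<1$.

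First, differentiate:
\begin{align*}
g'(t)=\frac{(1-t)+(1+t)}{(1-t)^2}=\frac{2}{(1-t)^2}>0 \quad\text{on } [0,1).
\end{align*}
Since $t=(m-1)\rho$ with $m\ge 2$, the chain rule gives
\begin{align*}
\frac{d}{d\rho}\,g\bigl((m-1)\rho\bigr)=\frac{2(m-1)}{\bigl(1-(m-1)\rho\bigr)^2}>0.
\end{align*}
Hence the guaranteed bound strictly increases with $\rho$. It also equals $1$ at $\rho=0$, the orthonormal case, where the Gram matrix is $\mb I$ and $\kappa=1$ is attained. It diverges as $\rho\to 1/(m-1)$.

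Second, to make the claim sharp rather than merely about a bound, show the bound is tight for an equiangular configuration. Take unit vectors with all pairwise inner products equal to $-\rho$. Their Gram matrix is $(1+\rho)\mb I-\rho\mb 1\mb 1^\top$, which is positive definite exactly when $(m-1)\rho<1$. Its eigenvalues are $1+\rho$ (multiplicity $m-1$) and $1-(m-1)\rho$, so its condition number is $(1+\rho)/(1-(m-1)\rho)$, which is strictly increasing in $\rho$. Alternatively, take $\mb I+\rho\bigl(\bs u\bs u^\top+\bs v\bs v^\top-\dots\bigr)$-type sign patterns attaining the Gershgorin extremes. In either case the worst case over all configurations with coherence $\rho$ is monotone in $\rho$. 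Combining the two steps: the tight worst-case condition number, and the Theorem~\ref{theorem:norm} bound, both increase with $\rho(\hat{\mb E}_{\mc U})$.

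The only real obstacle is interpretive. Monotonicity must be stated for the bound, or for the worst case over configurations with a given coherence, not for every individual matrix. I would state this explicitly, and then the derivative computation is routine.
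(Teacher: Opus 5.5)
Your first step is exactly the paper's proof. The paper sets $f(\rho)=\frac{1+(m-1)\rho}{1-(m-1)\rho}$ and notes $f'(\rho)=\frac{2(m-1)}{(1-(m-1)\rho)^2}>0$. It, too, proves monotonicity of the Theorem~\ref{theorem:norm} bound rather than of the eigenvalue ratio itself, so your interpretive remark matches what is actually established. Your explicit $m\ge 2$ is the right proviso for strictness. That step is complete on its own.

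Your second, ``sharpness'' step should be dropped or rewritten, because its central claim is false for $m\ge 3$. The equiangular Gram matrix $(1+\rho)\mathbf I-\rho\mathbf 1\mathbf 1^\top$ has condition number $\frac{1+\rho}{1-(m-1)\rho}$, as you computed yourself. This is strictly smaller than $\frac{1+(m-1)\rho}{1-(m-1)\rho}$ whenever $m\ge 3$ and $\rho>0$, since the configuration hits only the lower Gershgorin endpoint.

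No other sign pattern can fix this. Write $\mb C:=\hat{\mb E}_{\mc U}\hat{\mb E}_{\mc U}^\top$. Equality in the ratio bound needs $\lambda_{\max}(\mb C)=1+(m-1)\rho$ and $\lambda_{\min}(\mb C)=1-(m-1)\rho$ at the same time. Because $\mathrm{tr}(\mb C)=m$, the other $m-2$ eigenvalues then sum to $m-2$, so their squares sum to at least $m-2$. That gives $\|\mb C\|_F^2\ge m+2(m-1)^2\rho^2$. But unit diagonal and $|\mb C_{ij}|\le\rho$ force $\|\mb C\|_F^2\le m+m(m-1)\rho^2$. Both hold only if $m\le 2$, so the bound is never attained for $m\ge 3$, $\rho>0$.

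Your last sentence also does not follow. One family whose condition number grows with $\rho$ only lower-bounds the worst case; it does not show the worst case over configurations of coherence $\rho$ is monotone. That monotonicity is trivial if you take the supremum over coherence at most $\rho$, since those feasible sets are nested, but your example does not show it. What the example does show, when $d\ge m$ so it is realizable, is narrower. The worst case lies between $\frac{1+\rho}{1-(m-1)\rho}$ and $\frac{1+(m-1)\rho}{1-(m-1)\rho}$. So the blow-up as $\rho\to 1/(m-1)$ is genuine, even though the bound is not tight.
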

    \vspace{-0.5em}
The proof is provided in Appendix~\ref{proof:rho_gcn}.
That is, if our measurement indicate that Rec-PCA reduces $\rho(\hat{\mb E}_{\mc U})$, or equivalently, increases the distinguishability between positive samples and hard negative samples, we can expect it to reduce $\kappa(\hat{\mb E}_\mc U\hat{\mb E}_\mc U^\top)$, thereby lowering the upper bound of $\kappa(\mb H_{h\mc U})$.

In a spectral filtering view, minimizing the total variation of representations on the graph attenuates task-mismatched (or equivalently, graph-inconsistent) components that dominate recommendation-misaligned similarity.
We empirically validate this in Figure~\ref{fig:rho_main}. We observe that Rec-PCA consistently reduces the effective coherence. 
Finally, as the most intuitive demonstration, Figure~\ref{fig:loss_main} shows the loss curve throughout training with and without Rec-PCA. We observe that Rec-PCA significantly facilitates the decline of the training loss. Throughout the training process, its loss remains significantly lower than other comparison methods. This effectively validates that our method facilitates training.

\section{Experiments}

We conduct experiments on three public datasets using three backbones. \textbf{Implementation details} are provided in Appendix~\ref{app:implement}. The overall performance comparison is presented in Section~\ref{sec:result}. Then, Section~\ref{sec:compat} demonstrates our compatibility with existing methods by integrating them. The ablation study is provided in Section~\ref{sec:abl}. Some key \textbf{hyperparameter analysis} is provided in Appendix~\ref{sec:hyper}.

\subsection{Experimental Settings}\label{subsec:exp-setting}

\noindent\textbf{Datasets.}
We adopt three datasets for verification, i.e., \textbf{Yelp}, \textbf{Amazon Sports}, and \textbf{Amazon CDs}. Statistics and methods of training-test splitting are in Appendix~\ref{sec:dataset}.

\noindent\textbf{Evaluation Protocols.}
To evaluate the performance, we employ Hit Rate (H@$N$) and Normalized Discounted Cumulative Gain (N@$N$) and report the results for $N \in \{5,10\}$.

\noindent\textbf{Baselines.}
We compare our method with four up-to-date LLM-enhanced sequential recommendation methods, including LLMInit~\cite{zhang2025llminit}, LLMEmb~\cite{liu2025llmemb}, LLM-ESR~\cite{liu2024llm}, and LLM2Rec~\cite{he2025llm2rec}. Details about baselines are in Appendix~\ref{app:baseline}.

\noindent\textbf{Backbones.}
Following recent work~\cite{liu2025llmemb,liu2024llm,he2025llm2rec}, we test our method on three well-known models, i.e., GRU4Rec~\cite{hidasi2015session}, Bert4Rec~\cite{sun2019bert4rec}, and SASRec~\cite{kang2018self}. Details about backbones are in Appendix~\ref{app:backbone}.

\subsection{Performance Comparison}\label{sec:result}
The performance of all methods is provided in Table~\ref{tab:all}. From the results, we observe that:
\newline
\textbf{(\romannumeral1)} \ourmethod outperforms all baselines significantly on all datasets and achieves remarkable improvements over the best baseline for all backbones. We've seen significant growth on Yelp and Amazon Sports, with increases exceeding 5\% across the board.
\newline 
\textbf{(\romannumeral2)} Our approach demonstrates excellent performance across all backbones while preserving their relative strengths. This indicates that our method exhibits strong generalization across backbones and has broad applicability.

\begin{table}[htbp]\renewcommand{\arraystretch}{1.2}
    \caption{Compatibility with existing methods.}
    \label{tab:compat}
    \vspace{-0.5em}
    \centering
    \resizebox{\linewidth}{!}{
    \begin{tabular}{cc|cc|cc} \toprule
        \multirow{2}{*}{Backbone} & \multirow{2}{*}{Method} & \multicolumn{2}{c|}{Yelp} & \multicolumn{2}{c}{Sports}\\
         & & H@10 & N@10 & H@10 & N@10\\
        \midrule
        \multirow{6}{*}{GRU4Rec} & \multicolumn{1}{l|}{LLMemb} & 0.0289 & 0.0188 & 0.0095 & 0.0087\\
         & \multicolumn{1}{r|}{+ours} & 0.0321 & 0.0206 & 0.0103 & 0.0094\\
         & \textit{Improv.b} & 11.07\% & 9.57\% & 8.42\% & 8.05\%\\
         \cline{2-6}
         & \multicolumn{1}{l|}{LLM2Rec} & 0.0300 & 0.0197 & 0.0099 & 0.0091\\
         & \multicolumn{1}{r|}{+ours} & 0.0327 & 0.0213 & 0.0109 & 0.0100\\
         & \textit{Improv.b} & 9.00\% & 8.12\% & 10.10\% & 9.89\%\\
        \midrule
        \midrule
        \multirow{6}{*}{Bert4Rec} & \multicolumn{1}{l|}{LLMemb} & 0.0423 & 0.0221 & 0.0112 & 0.0094\\
         & \multicolumn{1}{r|}{+ours} & 0.0462 & 0.0241 & 0.0124 & 0.0105\\
         & \textit{Improv.b} & 9.22\% & 9.04\% & 10.71\% & 11.70\%\\
         \cline{2-6}
         & \multicolumn{1}{l|}{LLM2Rec} & 0.0427 & 0.0233 & 0.0118 & 0.0095\\
         & \multicolumn{1}{r|}{+ours} & 0.0474 & 0.0258 & 0.0134 & 0.0104\\
         & \textit{Improv.b} & 11.01\% & 10.73\% & 13.56\% & 9.47\%\\
        \midrule
        \midrule
        \multirow{6}{*}{SASRec} & \multicolumn{1}{l|}{LLMemb} & 0.0435 & 0.0236 & 0.0120 & 0.0098\\
         & \multicolumn{1}{r|}{+ours} & 0.0480 & 0.0260 & 0.0129 & 0.0107\\
         & \textit{Improv.b} & 10.34\% & 10.17\% & 7.50\% & 9.18\%\\
         \cline{2-6}
         & \multicolumn{1}{l|}{LLM2Rec} & 0.0449 & 0.0240 & 0.0123 & 0.0099\\
         & \multicolumn{1}{r|}{+ours} & 0.0503 & 0.0269 & 0.0141 & 0.0111\\
         & \textit{Improv.b} & 12.03\% & 12.08\% & 14.63\% & 12.12\%\\
        \bottomrule
    \end{tabular}
    }
    \vspace{-1.0em}
\end{table}

\subsection{Compatibility with Existing Methods}\label{sec:compat}

Our approach can be easily combined with other LLM-enhanced methods. Specifically, all LLM-enhanced methods comprise two operations: 1) integrating LLM representations into backbone recommendation models, and 2) subsequently retraining the backbone models. When integrating textual representations into backbone recommendation models, if PCA is originally used to reduce LLM representations, we directly replace it with Rec-PCA. Otherwise, we insert Rec-PCA at this step, using the reduced representations as new LLM representations. Moreover, when re-training the backbone recommendation models, it is unavoidable to compute item logits. Here, we use normalized item embeddings for this computation.
Following the above steps, we integrate our method into two state-of-the-art LLM-enhanced recommendation models, i.e., LLMEmb~\cite{liu2025llmemb} and LLM2Rec~\cite{he2025llm2rec}. 

We report the results in Table~\ref{tab:compat}. After integrating our method (denoted as ``+ours''), we significantly outperform the original methods across all datasets and backbones, demonstrating the effectiveness of our augmentation on existing models. Moreover, the idea can be applied to various methods with different performances, while maintaining the relative superiority of the original methods.

\begin{figure}
\centering
  \includegraphics[width=\linewidth]{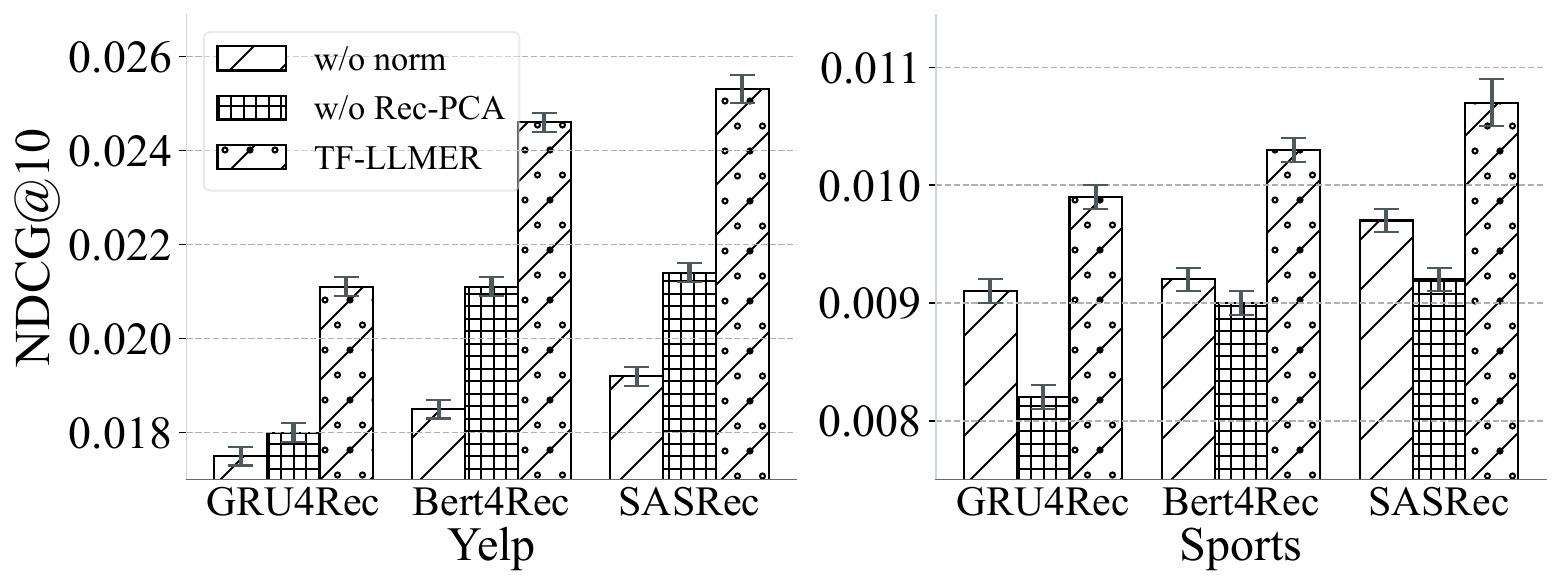}
    \vspace{-1.7em}
  \caption{Ablation study of our method, including three variants.}
  \label{fig:abl_main}
    \vspace{-1.7em}
\end{figure}

\subsection{Ablation Study}\label{sec:abl}

To validate the effectiveness of our key components, we design two variants: 1) \textbf{\ourmethod w/o norm} retains Rec-PCA but uses unnormalized item embeddings to compute the logits; 2) \textbf{\ourmethod w/o Rec-PCA} retains embedding normalization but replaces Rec-PCA with vanilla PCA. 

Figure~\ref{fig:abl_main} shows the results of these two variants, together with the full version of \ourmethod. The full results are given in Appendix~\ref{app:abl}.
We find that all variants are inferior to the original \ourmethod, which demonstrates the effectiveness of all key components. Moreover, we observe that item normalization has a large effect on Yelp, while Rec-PCA has a greater impact on Amazon Sports. We believe this relates to the characteristics of the item representations generated by the LLM across different datasets. On Yelp and Amazon CDs, the LLM-generated item representations exhibit larger magnitude differences, leading to more significant training challenges.
On Amazon Sports, increasing the distinguishability between effective items significantly improves training effectiveness, thereby amplifying Rec-PCA's impact on recommendation performance.

\section{Conclusion}

This work revisits LLM-enhanced recommendation from a representation-level optimization perspective. Motivated by the observation that LLM-derived item embeddings exhibit substantial norm disparities, thereby inducing severe training ill-conditioning, we highlight the importance of normalizing embeddings when computing logits, thereby yielding provable conditioning bounds.
Furthermore, we address the misalignment between semantic signals and collaborative structure by proposing Rec-PCA. It injects sequence-derived co-occurrence relations into dimensionality reduction via a controllable total-variation term. Together, normalization and Rec-PCA form a lightweight framework (\ourmethod).



\bibliography{refs}
\bibliographystyle{icml2026}

\newpage
\appendix
\onecolumn
\section{Proofs}

\subsection{Derivation of Hessian Matrixes}\label{proof:h}

The detailed formula of $\mb H_s$ is
\begin{align}\label{eq:def_hs}
    \mb H_s:=\nabla_{\bs s}^2(-\log p_y)=\text{Diag}(\bs p)-\bs p\bs p^\top,
\end{align}
where $\bs p=\softmax(\bs s)$ and $\text{Diag}(\bs p)$ denotes the diagonal matrix whose values are $p_i$.

\begin{proof}
Recall that $\ell(\bs h)=-\log p_y$, where
\begin{align}
    \bs p=\softmax(\bs s)\qquad\text{and}\qquad p_i=\frac{e^{s_i}}{\sum_{j=1}^{|\mc I|}e^{s_j}}.
\end{align}

It can be derived that
\begin{align}
    \nabla_{\bs s}\ell=\frac{\partial}{\partial\bs s}(-\log p_y)=\bs p-\bs\delta_y,
\end{align}
where $\bs\delta_y$ is a one-hot vector centered on $y$.

Then, using the chain rule, we have the gradient w.r.t. $\bs h$ given by
\begin{align}
    \nabla_{\bs h}\ell=\frac{\partial}{\partial\bs h}(-\log p_y)=\mb E^\top\nabla_{\bs s}\ell=\mb E^\top(\bs p-\bs\delta_y)
\end{align}

Having the gradients, we obtain the Hessian w.r.t. $\bs s$ as follows:
\begin{align}
    \mb H_{\bs s}:=\nabla_{\bs s}^2\ell=\frac{\partial}{\partial\bs s}(\bs p-\bs\delta_y)=\text{Diag}(\bs p)-\bs p\bs p^\top,
\end{align}
where $\text{Diag}(\bs p)$ denotes the diagonal matrix whose values are $p_i$.

Finally, since $\bs s=\mb E\bs h$, we obtain the Hessian w.r.t. $\bs h$ using the chain rules as follows:
\begin{align}
    \mb H_{\bs h}:=\nabla_{\bs h}^2\ell=\nabla_{\bs h}(\mb E^\top\nabla_{\bs s}\ell)=\mb E^\top\mb H_{\bs s}\mb E.
\end{align}

\end{proof}

\subsection{Proof of Theorem~\ref{theorem:norm_disparity}}\label{proof:norm_disparity}

We first introduce the following lemma.
\begin{lemma}\label{lemma:kappa}
    In the effective subspace $\mc U$, we obtain
    \begin{align}
        \kappa(\mb H_{h\mc U})\le \frac{\beta}{\alpha}\cdot \kappa\left(\mb E_{\mc U}^\top\mb E_{\mc U}\right),
    \end{align}
    where $\mb E_{\mc U}:=\mc P_{\mc U}\mb E$ denotes the projection of $\text{ }\mb E$ onto the subspace $\mc U$.
\end{lemma}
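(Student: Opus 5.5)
The plan is a Rayleigh-quotient sandwich. By definition $\mb H_{h\mc U}=\mb E_{\mc U}^\top\mb H_{s\mc U}\mb E_{\mc U}$. For any $\bs h\in\R^d$, put $\bs v:=\mb E_{\mc U}\bs h$. Since $\mb E_{\mc U}=\mc P_{\mc U}\mb E$, the vector $\bs v$ lies in $\mc U$. We then have
\begin{align*}
\bs h^\top\mb H_{h\mc U}\bs h=\bs v^\top\mb H_{s\mc U}\bs v=\bs v^\top\mb H_s\bs v,
\end{align*}
where the last equality holds because $\mc P_{\mc U}$ acts as the identity on $\mc U$. Assumption~\ref{ass:effective} then gives
\begin{align*}
\alpha\,\bs h^\top\mb E_{\mc U}^\top\mb E_{\mc U}\bs h\le \bs h^\top\mb H_{h\mc U}\bs h\le \beta\,\bs h^\top\mb E_{\mc U}^\top\mb E_{\mc U}\bs h,
\end{align*}
so $\alpha\,\mb E_{\mc U}^\top\mb E_{\mc U}\preceq \mb H_{h\mc U}\preceq \beta\,\mb E_{\mc U}^\top\mb E_{\mc U}$ in the Loewner order.

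Next, apply Courant--Fischer, or simply take the max and min of the Rayleigh quotient over unit $\bs h$. This yields
\begin{align*}
\lambda_{\max}(\mb H_{h\mc U})&\le\beta\,\lambda_{\max}(\mb E_{\mc U}^\top\mb E_{\mc U}),\\
\lambda_{\min}(\mb H_{h\mc U})&\ge\alpha\,\lambda_{\min}(\mb E_{\mc U}^\top\mb E_{\mc U}).
\end{align*}
Dividing the two inequalities gives the claimed $\kappa(\mb H_{h\mc U})\le\frac{\beta}{\alpha}\kappa(\mb E_{\mc U}^\top\mb E_{\mc U})$.

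The only delicate point is nondegeneracy. The division is meaningful only if $\lambda_{\min}(\mb E_{\mc U}^\top\mb E_{\mc U})>0$. If it vanishes, the right-hand side is $+\infty$ and the bound holds trivially.

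It is also worth noting where the condition numbers live, since the paper later uses $\hat{\mb E}_{\mc U}\hat{\mb E}_{\mc U}^\top$. As $m$ plays the role of both the effective dimension and the number of embeddings $\{\bs e_i\}_{i=1}^m$, the paper treats the relevant operators as square or full rank in the effective setting. In that setting the nonzero spectra of $\mb E_{\mc U}^\top\mb E_{\mc U}$ and $\mb E_{\mc U}\mb E_{\mc U}^\top$ coincide, so $\kappa$ may be taken equivalently on either Gram matrix. The lemma itself does not need this, though.
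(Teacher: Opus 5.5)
Your proposal is correct and matches the paper's proof: both derive the Loewner sandwich $\alpha\,\mb E_{\mc U}^\top\mb E_{\mc U}\preceq\mb H_{h\mc U}\preceq\beta\,\mb E_{\mc U}^\top\mb E_{\mc U}$ from Assumption~\ref{ass:effective}, bound $\lambda_{\max}(\mb H_{h\mc U})$ and $\lambda_{\min}(\mb H_{h\mc U})$ separately, and divide. Your Rayleigh-quotient check of the sandwich via $\bs v=\mb E_{\mc U}\bs h\in\mc U$, and your remark on the degenerate case $\lambda_{\min}(\mb E_{\mc U}^\top\mb E_{\mc U})=0$, fill in details the paper leaves implicit.
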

\begin{proof}
    Assumption~\ref{ass:effective} implies that
    \begin{align}
        \alpha\mb I_{\mc U}\preceq\mb H_{s\mc U}\preceq\beta\mb I_{\mc U},
    \end{align}
    where $\mb I_{\mc U}:=\mc P_{\mc U}\mb I\mc P_{\mc U}^\top$ and $\mb H_{s\mc U}:=\mc P_{\mc U}\mb H_s\mc P_{\mc U}^\top$ denote the projection of $\mb I$ and $\mb H_s$ onto the subspace $\mc U$, respectively.
    
    Then, combining with the definition of $\mb H_{h\mc U}$, i.e., $\mb H_{h\mc U}:=\mb E_{\mc U}^\top\mb H_{s\mc U}\mb E_{\mc U}$, yields that, in subspace $\mc U$,
    \begin{align}
        \alpha\mb E_{\mc U}^\top\mb E_{\mc U}\preceq\mb H_{h\mc U}\preceq\beta\mb E_{\mc U}^\top\mb E_{\mc U}.
    \end{align}
    
    Taking the maximal and minimal eigenvalues yields that, in the subspace $\mc U$,
    \begin{align}
        \lambda_{\max}(\mb H_{h\mc U})\le\beta\lambda_{\max}(\mb E_{\mc U}^\top\mb E_{\mc U})\quad\text{and}\quad \lambda_{\min}(\mb H_{h\mc U})\ge\alpha\lambda_{\min}(\mb E_{\mc U}^\top\mb E_{\mc U}).
    \end{align}
    Dividing the two inequalities gives the lemma.
\end{proof}

Then, to obtain the upper bound of $\kappa(\mb H_{h\mc U})$, we first introduce the following lemma that gives the upper bound of $\kappa(\mb E_{\mc U}^\top\mb E_{\mc U})$.

\begin{lemma}\label{lemma:kappa_gram}
    Suppose that $\mb E_{\mc U}^\top\mb E_{\mc U}$ and $\hat{\mb E}_{\mc U}^\top\hat{\mb E}_{\mc U}$ are positive definit. The norm disparity of items quadratically deteriorates the upper bound of $\kappa(\mb E_{\mc U}^\top\mb E_{\mc U})$. Formally,
    \begin{align}
        \kappa(\mb E_{\mc U}^\top\mb E_{\mc U})\le \left(\frac{r_{\max}}{r_{\min}}\right)^2\cdot\kappa(\hat{\mb E}_{\mc U}^\top\hat{\mb E}_{\mc U}).
    \end{align}
    Moreover, this upper bound is achievable even when the normalized item embedding is well-conditioned.
\end{lemma}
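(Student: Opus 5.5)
Write $\mb E_{\mc U}=\mb D\hat{\mb E}_{\mc U}$ with $\mb D:=\text{Diag}(\|\bs e_1\|,\dots,\|\bs e_m\|)$, i.e., each row of $\mb E_{\mc U}$ is the corresponding normalized row rescaled by its norm. Then $\mb E_{\mc U}^\top\mb E_{\mc U}=\hat{\mb E}_{\mc U}^\top\mb D^2\hat{\mb E}_{\mc U}$. Since $r_{\min}^2\mb I\preceq\mb D^2\preceq r_{\max}^2\mb I$, congruence by $\hat{\mb E}_{\mc U}$ preserves the Loewner order, which gives
\begin{align*}
r_{\min}^2\,\hat{\mb E}_{\mc U}^\top\hat{\mb E}_{\mc U}\preceq\mb E_{\mc U}^\top\mb E_{\mc U}\preceq r_{\max}^2\,\hat{\mb E}_{\mc U}^\top\hat{\mb E}_{\mc U}.
\end{align*}
Taking $\lambda_{\max}$ of the right inequality and $\lambda_{\min}$ of the left (both Gram matrices are positive definite by hypothesis, so $\lambda_{\min}>0$), and dividing, yields the claimed bound. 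This is the same sandwich argument used in Lemma~\ref{lemma:kappa}, so the main inequality is routine. Note that $\hat{\mb E}_{\mc U}^\top\hat{\mb E}_{\mc U}$ and $\hat{\mb E}_{\mc U}\hat{\mb E}_{\mc U}^\top$ share nonzero eigenvalues. Theorem~\ref{theorem:norm_disparity} then follows by chaining with Lemma~\ref{lemma:kappa}.

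For the achievability claim, I would build an explicit example with $m=d$ and the normalized embeddings orthonormal, $\hat{\mb E}_{\mc U}=\mb I$. This is perfectly conditioned, since $\kappa(\hat{\mb E}_{\mc U}^\top\hat{\mb E}_{\mc U})=1$. Set one norm to $r_{\max}$ and another to $r_{\min}$. Then $\mb E_{\mc U}^\top\mb E_{\mc U}=\mb D^2$ has condition number exactly $(r_{\max}/r_{\min})^2$, so equality holds. Letting $r_{\max}/r_{\min}\to\infty$ also shows the quantity can blow up while the normalized Gram matrix stays well conditioned.

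The one delicate point is bookkeeping about which Gram matrix ($d\times d$ versus $m\times m$) is meant, and making sure positive definiteness is used only where it holds. I would state the argument for $\mb E_{\mc U}^\top\mb E_{\mc U}$ on the relevant column space and remark that the same sandwich works for the $m\times m$ form $\mb D\hat{\mb E}_{\mc U}\hat{\mb E}_{\mc U}^\top\mb D$. For that form, Ostrowski's theorem, or the bound $\lambda_{\min}(\mb D\mb M\mb D)\ge r_{\min}^2\lambda_{\min}(\mb M)$ obtained via the Rayleigh quotient with $\bs u=\mb D\bs v$, gives the same bound.
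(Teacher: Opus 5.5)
Your proof is correct and is the paper's argument in matrix form: $\hat{\mb E}_{\mc U}^\top\mb D^2\hat{\mb E}_{\mc U}$ is exactly the paper's rank-one expansion $\sum_{i}r_i^2\hat{\bs e}_i\hat{\bs e}_i^\top$, followed by the same Loewner sandwich and the same Courant--Fischer comparison of extreme eigenvalues. The paper's proof of this lemma never verifies the achievability clause, so your orthonormal-directions example (equality with $\kappa(\hat{\mb E}_{\mc U}^\top\hat{\mb E}_{\mc U})=1$) fills that gap; it is the same construction the paper later uses in Lemma~\ref{lemma:unnorm}, with $\bs e_1=(R,0)^\top$ and $\bs e_2=(0,1)^\top$.
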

\begin{proof}
    Let 
    \begin{align}
        \mb B := \mb E_{\mc U}^\top\mb E_{\mc U}\qquad\text{and}\qquad\mb G:=\hat{\mb E}_{\mc U}^\top\hat{\mb E}_{\mc U}.
    \end{align}
    We separate the magnitude and angle of the item embeddings by letting $\bs e_i=r_i\hat{\bs e}_i$, where $r_i\in \R$ and $\hat{\bs e}_i\in\R^d$ denotes the norm of $\bs e_i$ and the normalized embedding, respectively. Then, we have
    \begin{align}
        \mb B=\sum_{i\in\mc U}r_i^2\hat{\bs e}_i\hat{\bs e}_i^\top\qquad\text{and}\qquad\mb G=\sum_{i\in\mc U}\hat{\bs e}_i\hat{\bs e}_i^\top.
    \end{align}
    Note that, for all $i\in\mc U$, we have
    \begin{align}
        r_{\min}^2\le r_i^2\le r_{\max}^2.
    \end{align}
    Since $\hat{\bs e}_i\hat{\bs e}_i^\top$ is positive semi-definite and multiplying a positive semi-definite matrix maintains the Loewner order, we obtain
    \begin{align}
        r_{\min}^2\hat{\bs e}_i\hat{\bs e}_i^\top\preceq r_i^2\hat{\bs e}_i\hat{\bs e}_i^\top\le r_{\max}^2\hat{\bs e}_i\hat{\bs e}_i^\top,\quad\forall i\in\mc U.
    \end{align}
    Consequently,
    \begin{align}
        r_{\min}^2\sum_{i\in\mc U}\hat{\bs e}_i\hat{\bs e}_i^\top\preceq \sum_{i\in\mc U}r_i^2\hat{\bs e}_i\hat{\bs e}_i^\top\le r_{\max}^2\sum_{i\in\mc U}\hat{\bs e}_i\hat{\bs e}_i^\top,\quad\forall i\in\mc U.
    \end{align}
    Namely,
    \begin{align}
        r_{\min}^2\mb G\preceq\mb B\preceq r_{\max}^2\mb G.
    \end{align}
    Using Courant-Fischer Minimax Theorem~\cite{ikebe1987monotonicity}, we see
    \begin{align}
        \lambda_{\max}(\mb B)\le \lambda_{\max}(r_{\max}^2\mb G)=r_{\max}^2\lambda_{\max}(\mb G)\\
        \lambda_{\min}(\mb B)\ge \lambda_{\min}(r_{\min}^2\mb G)=r_{\min}^2\lambda_{\min}(\mb G).
    \end{align}
    Since $\mb B$ and $\mb G$ are positive definit, we have
    \begin{align}
        \kappa(\mb B)=\frac{\lambda_{\max}(\mb B)}{\lambda_{\min}(\mb B)}\le \left(\frac{r_{\max}}{r_{\min}}\right)^2\cdot\kappa(\mb G).
    \end{align}

\end{proof}

\begin{lemma}\label{lemma:eigen}
    Let $\mb A=\mb X\mb X^\top$ and $\mb B=\mb X^\top\mb X$, where $\mb X\in\R^{m\times n}$. Then, $\mb A$ and $\mb B$ share the same eigenvalues.
\end{lemma}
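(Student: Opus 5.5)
The plan is to prove the standard fact that $\mb X\mb X^\top$ and $\mb X^\top\mb X$ have the same \emph{nonzero} eigenvalues with the same multiplicities. The remaining eigenvalues are zero and make up the difference in size: $\mb A$ is $m\times m$ and $\mb B$ is $n\times n$. This is the sense in which the lemma will be used, namely to pass between $\kappa(\mb E_{\mc U}\mb E_{\mc U}^\top)$ and $\kappa(\mb E_{\mc U}^\top\mb E_{\mc U})$, and likewise for the normalized versions. Both matrices are symmetric positive semidefinite, so all eigenvalues are real and nonnegative and both matrices are diagonalizable. It therefore suffices to match the eigenspaces of each nonzero eigenvalue.

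\textbf{Step 1 (eigenvectors transfer).} Let $\mb A\bs u=\lambda\bs u$ with $\lambda\neq0$ and $\bs u\neq 0$. Set $\bs v=\mb X^\top\bs u$. Then $\mb B\bs v=\mb X^\top\mb X\mb X^\top\bs u=\lambda\bs v$. Moreover $\bs v\neq0$, because $\mb X\bs v=\mb A\bs u=\lambda\bs u\neq0$. So every nonzero eigenvalue of $\mb A$ is an eigenvalue of $\mb B$, and by the symmetric argument (using $\bs u=\mb X\bs v$) the converse also holds.

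\textbf{Step 2 (multiplicities).} Restricted to the $\lambda$-eigenspace of $\mb A$, the map $\bs u\mapsto\mb X^\top\bs u$ is injective. Indeed, $\mb X\mb X^\top\bs u=\lambda\bs u$ recovers $\bs u$ from $\mb X^\top\bs u$. Hence $\dim\ker(\mb A-\lambda\mb I)\le\dim\ker(\mb B-\lambda\mb I)$, and symmetry gives equality. Since both matrices are diagonalizable, geometric and algebraic multiplicities coincide.

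As a cleaner alternative I may use the SVD $\mb X=\mb U\bs\Sigma\mb V^\top$. This gives $\mb A=\mb U\bs\Sigma\bs\Sigma^\top\mb U^\top$ and $\mb B=\mb V\bs\Sigma^\top\bs\Sigma\mb V^\top$, whose nonzero diagonal entries are the same values $\sigma_i^2$.

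The only real subtlety, and the main obstacle, is the zero eigenvalue. When $m\neq n$ the spectra literally differ by $|m-n|$ zeros, so ``same eigenvalues'' must be read as ``same nonzero eigenvalues''. Consequently the condition numbers agree only when both matrices are nonsingular, which is exactly the positive-definiteness hypothesis assumed in Lemma~\ref{lemma:kappa_gram}. I will state this caveat explicitly so that Lemma~\ref{lemma:kappa_gram} can be transported from the $d\times d$ Gram form to the $m\times m$ similarity form appearing in Theorem~\ref{theorem:norm_disparity}.
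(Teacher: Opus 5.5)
Your proof is correct and takes the same route as the paper's, namely pushing an eigenvector $\bs u$ of $\mb X\mb X^\top$ with nonzero eigenvalue through $\mb X^\top$. You also supply what the paper's short argument leaves out: that $\mb X^\top\bs u\neq 0$, the reverse inclusion, equality of multiplicities, and the fact that only the nonzero spectra coincide when $m\neq n$.

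One small correction to your closing remark: both Gram matrices are nonsingular only when $\mb X$ is square and invertible. So positive definiteness of the $d\times d$ matrices in Lemma~\ref{lemma:kappa_gram} does not make the $m\times m$ matrix $\hat{\mb E}_{\mc U}\hat{\mb E}_{\mc U}^\top$ nonsingular unless $m=d$. For $m>d$ the condition-number equality used in Theorem~\ref{theorem:norm_disparity} fails, and that bound then holds only vacuously.
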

\begin{proof}
It is obvious that the $\mb A$ and $\mb B$ have the same rank. Let the rank of $\mb A$ and $\mb B$ be $d$. Suppose that $\{\lambda_i\}_{i=1}^d$ be the non-zero eigenvalues of $\mb A$, and $\{\bs u_i\}_{i=1}^d$ be the corresponding eigenvectors. It gives
\begin{align}
    \mb A\bs u_i=\lambda_i\bs u_i,\quad\forall i=1,\cdots,d.
\end{align}
Then,
\begin{align}
    \mb B\cdot(\mb X^\top\bs u_i)=\mb X^\top\mb X \mb X^\top\bs u_i=\mb X^\top\mb A\bs u_i=\lambda_i\cdot(\mb X^\top\bs u_i),\quad\forall i=1,\cdots,d.
\end{align}
Finally, we can conclude that $\{\lambda_i\}_{i=1}^d$ is also the eigenvalues of $\mb B$, and the corresponding eigenvectors are $\{\mb X^\top\bs u_i\}_{i=1}^d$.
\end{proof}

\textbf{Proof of Theorem~\ref{theorem:norm_disparity}.}
Firstly, by combining Lemma~\ref{lemma:kappa} with Lemma~\ref{lemma:kappa_gram}, we obtain that
\begin{align}
    \kappa(\mb H_{h\mc U})\le \frac{\beta}{\alpha}\cdot\left(\frac{r_{\max}}{r_{\min}}\right)^2\cdot\kappa(\hat{\mb E}_{\mc U}^\top\hat{\mb E}_{\mc U}).\label{eq:lemma}
\end{align}
Then, Lemma~\ref{lemma:eigen} gives that
\begin{align}
    \kappa(\hat{\mb E}_{\mc U}^\top\hat{\mb E}_{\mc U})=\frac{\lambda_{\max}(\hat{\mb E}_{\mc U}^\top\hat{\mb E}_{\mc U})}{\lambda_{\min}(\hat{\mb E}_{\mc U}^\top\hat{\mb E}_{\mc U})}=\frac{\lambda_{\max}(\hat{\mb E}_{\mc U}\hat{\mb E}_{\mc U}^\top)}{\lambda_{\min}(\hat{\mb E}_{\mc U}\hat{\mb E}_{\mc U}^\top)}=\kappa(\hat{\mb E}_{\mc U}\hat{\mb E}_{\mc U}^\top).
\end{align}
Substituting it into inequality~(\ref{eq:lemma}) completes the proof.

\subsection{Proof of Theorem~\ref{theorem:unnorm}}\label{proof:unnorm}

\begin{proof}
    Recall that $\mb H_{s\mc U}=\mc P_{\mc U}\mb H_s\mc P_{\mc U}^\top$. Under Assumption~\ref{ass:effective}, $\mb H_{s\mc U}$ is positive definite and can be decomposed as
    \begin{align}
        \mb H_{s\mc U}=\mb S^\top\mb S,
    \end{align}
    where $\mb S:=(\mb H_{s\mc U})^{1/2}$.
    Then, 
    \begin{align}
        \mb H_{h\mc U}=\mb E_{\mc U}^\top\mb H_{s\mc U}\mb E_{\mc U}=(\mb S\mb E_{\mc U})^\top(\mb S\mb E_{\mc U}).
    \end{align}
    Denote by $\sigma_{\min}(\cdot)$ and $\sigma_{\max}(\cdot)$ the minimum and maximum singular values, respectively.
    Therefore, we have
    \begin{align}
        \kappa(\mb H_{h\mc U})=\frac{\lambda_{\max}((\mb S\mb E_{\mc U})^\top(\mb S\mb E_{\mc U}))}{\lambda_{\min}((\mb S\mb E_{\mc U})^\top(\mb S\mb E_{\mc U}))}=\left(\frac{\sigma_{\max}(\mb S\mb E_{\mc U})}{\sigma_{\min}(\mb S\mb E_{\mc U})}\right)^2.
    \end{align}
    By the definition of the Rayleigh quotient, we know that for any unit vector $\bs x$, we have
    \begin{align}
        \lambda_{\min}(\mb H_{s\mc U})\cdot\|\mb E_{\mc U}\bs x\|^2\le (\mb E_{\mc U}\bs x)^\top(\mb S^\top\mb S)(\mb E_{\mc U}\bs x)=\|\mb S\mb E_{\mc U}\bs x\|^2.
    \end{align}
    Equally,
    \begin{align}
        \|\mb S\mb E_{\mc U}\bs x\|\ge \sqrt{\lambda_{\min}(\mb H_{s\mc U})}\|\mb E_{\mc U}\bs x\|.
    \end{align}
    Under Assumption~\ref{ass:effective}, we have $\lambda_{\min}(\mb H_{s\mc U})\ge \alpha$, thus,
    \begin{align}
        \|\mb S\mb E_{\mc U}\bs x\|\ge \sqrt{\alpha}\|\mb E_{\mc U}\bs x\|.
    \end{align}
    The variational characterization of singular values~\cite{horn2012matrix} gives that
    \begin{align}
        \sigma_{\max}(\mb S\mb E_{\mc U})=\max_{\|\bs x\|=1}\|\mb S\mb E_{\mc U}\bs x\|\quad \text{and}\quad\sigma_{\max}(\mb E_{\mc U})=\max_{\|\bs x\|=1}\|\mb E_{\mc U}\bs x\|.
    \end{align}
    Therefore, we get
    \begin{align}
        \sigma_{\max}(\mb S\mb E_{\mc U})\ge \sqrt{\alpha}\cdot\sigma_{\max}(\mb E_{\mc U}).
    \end{align}
    Simillary, for any unit vector $\bs x$, we have
    \begin{align}
        \lambda_{\max}(\mb H_{s\mc U})\cdot\|\mb E_{\mc U}\bs x\|^2\ge (\mb E_{\mc U}\bs x)^\top(\mb S^\top\mb S)(\mb E_{\mc U}\bs x)=\|\mb S\mb E_{\mc U}\bs x\|^2,
    \end{align}
    and
    \begin{align}
        \|\mb S\mb E_{\mc U}\bs x\|\le \sqrt{\lambda_{\max}(\mb H_{s\mc U})}\|\mb E_{\mc U}\bs x\|.
    \end{align}
    Therefore, for any unit vector $\bs x$,
    \begin{align}
        \|\mb S\mb E_{\mc U}\bs x\|\le \sqrt{\beta}\|\mb E_{\mc U}\bs x\|.
    \end{align}
    Then, the variational characterization of singular values gives that
    \begin{align}
        \sigma_{\min}(\mb S\mb E_{\mc U})\le \sqrt{\beta}\cdot\sigma_{\min}(\mb E_{\mc U}).
    \end{align}
    Finally, we have
    \begin{align}
        \kappa(\mb H_{h\mc U})=\left(\frac{\sigma_{\max}(\mb S\mb E_{\mc U})}{\sigma_{\min}(\mb S\mb E_{\mc U})}\right)^2\ge \frac{\alpha}{\beta}\left(\frac{\sigma_{\max}(\mb E_{\mc U})}{\sigma_{\min}(\mb E_{\mc U})}\right)^2=\frac{\alpha}{\beta}\cdot\kappa(\mb E_{\mc U}\mb E_{\mc U}^\top).
    \end{align}
\end{proof}

\subsection{Proof of Proposition~\ref{prop:unnorm}}\label{proof:prop_unnorm}
We first introduce the following lemma that gives the unbounded nature of $\kappa(\mb E_{\mc U}\mb E_{\mc U}^\top)$.
\begin{lemma}\label{lemma:unnorm}
    Given that the norms of item embeddings are unconstrained, then even if $\kappa(\hat{\mb E}_{\mc U}\hat{\mb E}_{\mc U}^\top)$ is a finite, $\kappa(\mb E_{\mc U}\mb E_{\mc U}^\top)$ can be arbitrarily large.
\end{lemma}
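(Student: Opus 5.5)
The proof will be by explicit construction: fix the angular structure and vary only the norms. Take normalized embeddings $\hat{\bs e}_1,\dots,\hat{\bs e}_m$ that are mutually orthonormal, which requires $d\ge m$. Then $\hat{\mb E}_{\mc U}\hat{\mb E}_{\mc U}^\top=\mb I_m$, so $\kappa(\hat{\mb E}_{\mc U}\hat{\mb E}_{\mc U}^\top)=1$ is finite, and in fact as well-conditioned as possible. Next set $\bs e_i=r_i\hat{\bs e}_i$, with $r_1=r_{\max}=R$ and $r_i=1$ for $i\ge 2$ (any $r_{\min}>0$ works). Because of orthonormality,
\begin{align*}
\mb E_{\mc U}\mb E_{\mc U}^\top=\mathrm{Diag}(r_1^2,\dots,r_m^2),
\end{align*}
and therefore $\kappa(\mb E_{\mc U}\mb E_{\mc U}^\top)=(r_{\max}/r_{\min})^2=R^2$. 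Letting $R\to\infty$ makes this arbitrarily large while $\kappa(\hat{\mb E}_{\mc U}\hat{\mb E}_{\mc U}^\top)$ stays equal to $1$, which proves the lemma. The same example also shows that the bound in Lemma~\ref{lemma:kappa_gram} is attained, as claimed there.

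The only delicate point is the case where one insists on $m>d$, or on non-orthogonal $\hat{\bs e}_i$. Then $\mb E_{\mc U}\mb E_{\mc U}^\top$ is singular and $\kappa$ is already infinite or ill-defined, so restricting to $m\le d$ and a positive definite Gram matrix is the natural setting. To handle a general fixed finite $\kappa(\hat{\mb E}_{\mc U}\hat{\mb E}_{\mc U}^\top)$, I would instead use a Rayleigh-quotient bound. Write $\mb E_{\mc U}\mb E_{\mc U}^\top=\mb D\hat{\mb E}_{\mc U}\hat{\mb E}_{\mc U}^\top\mb D$ with $\mb D=\mathrm{Diag}(r_i)$. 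Its largest eigenvalue is at least the diagonal entry $r_1^2$. Its smallest eigenvalue is at most the entry $r_2^2$ at index $2$, obtained by testing against $\bs\delta_2$. Hence $\kappa\ge (r_1/r_2)^2\to\infty$.

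Finally, Proposition~\ref{prop:unnorm} follows directly. Combining this lemma with Theorem~\ref{theorem:unnorm} gives
\begin{align*}
\kappa(\mb H_{h\mc U})\ge \frac{\alpha}{\beta}\,\kappa(\mb E_{\mc U}\mb E_{\mc U}^\top),
\end{align*}
so $\kappa(\mb H_{h\mc U})$ can be made arbitrarily large with $\alpha,\beta$ held fixed.
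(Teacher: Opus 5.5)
Your proof is correct and is essentially the paper's. The paper uses the special case $d=m=2$ with $\bs e_1=(R,0)^\top$, $\bs e_2=(0,1)^\top$, so $\mb E_{\mc U}\mb E_{\mc U}^\top=\mathrm{Diag}(R^2,1)$, and only asserts that other $d$ work ``similarly''; your $m\le d$ orthonormal construction makes that precise.

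Your additional diagonal-entry bound for $\mb D\hat{\mb E}_{\mc U}\hat{\mb E}_{\mc U}^\top\mb D$, namely $\lambda_{\max}\ge r_{\max}^2$ and $\lambda_{\min}\le r_{\min}^2$, is a strengthening the paper does not give. It shows the blow-up for every fixed positive definite angular configuration, not just an orthogonal one. One remark is inaccurate: you suggest that non-orthogonal $\hat{\bs e}_i$ make the Gram matrix singular, but singularity occurs only when the $\hat{\bs e}_i$ are linearly dependent, for example when $m>d$.
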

\begin{proof}
    Here we provide a constructive proof.
    
    Take $d=2$ and choose two effective rows in $\mc U$:
    \begin{align}
        \bs e_1=\begin{pmatrix}
            R & 0
        \end{pmatrix}^\top,\quad \bs e_2=\begin{pmatrix}
            0 & 1
        \end{pmatrix}^\top,
    \end{align}
    with $R\to \infty$. Then,
    \begin{align}
        \mb E_{\mc U}\mb E_{\mc U}^\top=\begin{pmatrix}
            R^2 & 0\\
            0 & 1
        \end{pmatrix},
    \end{align}
    so $\kappa(\mb E_{\mc U}\mb E_{\mc U}^\top)=R^2\to \infty$. And, for any $M>0$, we can always construct $\mb E$ such that $\kappa(\mb E_{\mc U}\mb E_{\mc U}^\top)\ge M$ by setting $R$ large enough.
    
    For other $d$, we can similarly construct the embeddings that $\kappa(\mb E_{\mc U}\mb E_{\mc U}^\top)\ge M$ for any $M>0$.
\end{proof}

Then, combining with Theorem~\ref{theorem:unnorm}, we conclude that $\kappa(\mb H_{h\mc U})$ can also be arbitrarily large.

\subsection{Proof of Theorem~\ref{theorem:norm}}\label{proof:norm}

To prove Theorem~\ref{theorem:norm}, we first present the following lemma.

\begin{lemma}[Gershgorin bound for effective cosine similarity matrix]
    Let $m=\text{dim}(\mc U)$. The smallest and largest eigenvalues of the cosine similarity matrix $\hat{\mb E}_\mc U\hat{\mb E}_\mc U^\top$ can be bounded as:
    \begin{align}
        \lambda_{\min}(\hat{\mb E}_\mc U\hat{\mb E}_\mc U^\top)\ge 1-(m-1)\rho(\hat{\mb E}_{\mc U}), \quad \lambda_{\max}(\hat{\mb E}_\mc U\hat{\mb E}_\mc U^\top)\le 1+(m-1)\rho(\hat{\mb E}_{\mc U}).
    \end{align}
\end{lemma}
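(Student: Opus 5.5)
The plan is to apply the Gershgorin circle theorem directly to the $m\times m$ cosine similarity matrix $\mb C:=\hat{\mb E}_\mc U\hat{\mb E}_\mc U^\top$, whose $(i,j)$ entry is $\braket{\hat{\bs e}_i,\hat{\bs e}_j}$. First I will record the structure of $\mb C$. Since each $\hat{\bs e}_i$ has unit norm, every diagonal entry equals $C_{ii}=\|\hat{\bs e}_i\|^2=1$. By the definition of effective coherence, every off-diagonal entry satisfies $|C_{ij}|\le\rho(\hat{\mb E}_{\mc U})$ for $i\neq j$. Moreover, $\mb C$ is a Gram matrix, so it is symmetric positive semidefinite and all its eigenvalues are real.

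Next I invoke Gershgorin's theorem. Every eigenvalue $\lambda$ of $\mb C$ lies in at least one disc centered at $C_{ii}=1$ with radius $R_i=\sum_{j\neq i}|C_{ij}|$. There are exactly $m-1$ off-diagonal entries in each row, and each is bounded by $\rho(\hat{\mb E}_{\mc U})$, so $R_i\le (m-1)\rho(\hat{\mb E}_{\mc U})$ for every $i$. Because the eigenvalues are real, the disc condition reduces to
\begin{align*}
|\lambda-1|\le (m-1)\rho(\hat{\mb E}_{\mc U}),
\end{align*}
which yields both the lower bound on $\lambda_{\min}(\mb C)$ and the upper bound on $\lambda_{\max}(\mb C)$ simultaneously.

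There is no substantive obstacle here; the argument is essentially a one-line application of Gershgorin. The only care needed is to justify that the spectrum is real, which follows from symmetry and allows the disc to be replaced by an interval. I will also note that the lemma does not require $(m-1)\rho<1$. That hypothesis matters only afterwards, in Theorem~\ref{theorem:norm}: it makes the lower bound positive, so $\mb C$ is positive definite and the condition number is well defined. Dividing the two bounds then gives that theorem.
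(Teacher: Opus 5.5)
Your proposal is correct and follows the paper's proof essentially step for step. Both arguments note the unit diagonal, bound the off-diagonal entries by the effective coherence, bound every Gershgorin radius by $(m-1)\rho(\hat{\mb E}_{\mc U})$, and conclude that all eigenvalues lie in $[1-(m-1)\rho(\hat{\mb E}_{\mc U}),\,1+(m-1)\rho(\hat{\mb E}_{\mc U})]$; your explicit remark that symmetry makes the spectrum real, so the discs reduce to intervals, is a point the paper leaves implicit.
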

\begin{proof}
    Let $\mb C:=\hat{\mb E}_{\mc U}\hat{\mb E}_{\mc U}^\top$.
    
    Since $\mb C_{ii}=\|\hat{\bs e}_{i}\|^2=1$, where $\hat{\bs e}_{i}$ denotes the $i$-th row of $\hat{\mb E}_{\mc U}$, we have $\mb C_{ii}=1$.

    By the definition of $\rho(\hat{\mb E}_{\mc U})$, we have $|\braket{\hat{\bs e}_i,\hat{\bs e}_j}|\le \rho(\hat{\mb E}_{\mc U}),\forall i\neq j\in \mc U$. Or equally, $|\mb C_{ij}|\le \rho(\hat{\mb E}_{\mc U}),\forall i\neq j$.

    For each row $i$, the Gershgorin radius~\cite{varga1962iterative} satisfies
    \begin{align}
        R_i=\sum_{j\neq i}|\mb C_{ij}|\le (m-1)\rho(\hat{\mb E}_{\mc U}).
    \end{align}

    By the Gershgorin circle theorem~\cite{varga1962iterative}, all eigenvalues of $\mb C$ lie in $[1-(m-1)\rho(\hat{\mb E}_{\mc U}), 1+(m-1)\rho(\hat{\mb E}_{\mc U})]$.
    
    By Lemma~\ref{lemma:eigen}, we have that $\hat{\mb E}_{\mc U}^\top\hat{\mb E}_{\mc U}$ and $\hat{\mb E}_{\mc U}\hat{\mb E}_{\mc U}^\top$ share the same eigenvalues. Thus, all eigenvalues of $\hat{\mb E}_{\mc U}^\top\hat{\mb E}_{\mc U}$ also lie in $[1-(m-1)\rho(\hat{\mb E}_{\mc U}), 1+(m-1)\rho(\hat{\mb E}_{\mc U})]$.
\end{proof}

\textbf{Proof of Theorem~\ref{theorem:norm}.}
When $1-(m-1)\rho(\hat{\mb E}_{\mc U})>0$, we have $\lambda_{\min}>0$. The condition number of $\hat{\mb E}_{\mc U}\hat{\mb E}_{\mc U}^\top$ satisfies
\begin{align}
    \kappa(\hat{\mb E}_{\mc U}\hat{\mb E}_{\mc U}^\top)=\kappa(\hat{\mb E}_{\mc U}^\top\hat{\mb E}_{\mc U})=\frac{\lambda_{\max}(\hat{\mb E}_{\mc U}^\top\hat{\mb E}_{\mc U})}{\lambda_{\min}(\hat{\mb E}_{\mc U}^\top\hat{\mb E}_{\mc U})}\le \frac{1+(m-1)\rho(\hat{\mb E}_{\mc U})}{1-(m-1)\rho(\hat{\mb E}_{\mc U})}.
\end{align}

Combining with Lemma~\ref{lemma:kappa}, we conclude that 
\begin{align}
    \kappa(\mb H_{h\mc U})\le \frac{\beta}{\alpha}\cdot\frac{1+(m-1)\rho(\hat{\mb E}_{\mc U})}{1-(m-1)\rho(\hat{\mb E}_{\mc U})}.
\end{align}

\subsection{Proof of Proposition~\ref{theorem:rho_gcn}}\label{proof:rho_gcn}

\begin{proof}
    Let 
    \begin{align}
        f(\rho)=\frac{1+(m-1)\rho}{1-(m-1)\rho}.
    \end{align}
    The derivative
    \begin{align}
        f'(\rho)=\frac{2(m-1)}{(1-(m-1)\rho)^2}>0 \quad\text{when}\quad (m-1)\rho<1,
    \end{align}
    indicating that $f(\rho)$ increases with $\rho$.
\end{proof}

\subsection{Derivation of Rec-PCA}\label{app:pca_derive}

Recall that our objective is
\begin{align}
    \mb P^*=\arg\max_{\mb P}\text{tr}(\mb P^\top(\mb X^\top(\mb I-\alpha\mb L)\mb X)\mb P)\label{eq:P*}
\end{align}
and we further require that $\mb P^\top(\mb X^\top(\mb I-\alpha\mb L)\mb X)\mb P$ is diagonal so that the resulting dimensions have zero covariance. 
Let
\begin{align}
\mb S \ :=\ \mb X^\top(\mb I-\alpha\mb L)\mb X \in \mathbb{R}^{d_{\mathrm{LLM}}\times d_{\mathrm{LLM}}}.
\end{align}
Since $\mb L$ is symmetric and $\mb X$ is real, $\mb S$ is symmetric. Moreover, the eigenvalues of $\mb L$ satisfy $\lambda(\mb L)\subset[0,2]$. Hence, for $\alpha\in[0,0.5]$, we have $\mb I-\alpha \mb L\succeq 0$ and the principal square root $(\mb I-\alpha \mb L)^{1/2}$ is well-defined.

\paragraph{Step 1: Diagonalization implies choosing an eigen-basis of $\mb S$.}
Let the eigen-decomposition of $\mb S$ be
\begin{align}
\mb S = \mb Q\mb \Lambda \mb Q^\top,\quad 
\mb \Lambda=\mathrm{Diag}(\lambda_1(\mb S),\dots,\lambda_{d_{\mathrm{LLM}}}(\mb S)),\quad
\lambda_1(\mb S)\ge \lambda_2(\mb S)\ge \cdots \ge \lambda_{d_{\mathrm{LLM}}}(\mb S).
\end{align}
If $\mb P$ consists of $d$ orthonormal eigenvectors of $\mb S$, i.e., $\mb P = \mb Q_{(:,\mathcal{A})}$ for some index set $\mathcal{A}$ with $|\mathcal{A}|=d$, then
\begin{align}
\mb P^\top \mb S \mb P 
= \mb Q_{(:,\mathcal{A})}^\top (\mb Q\mb \Lambda \mb Q^\top) \mb Q_{(:,\mathcal{A})}
= \mb \Lambda_{(\mathcal{A},\mathcal{A})},
\end{align}
which is diagonal. Therefore, to satisfy the diagonal-covariance requirement, it is natural (and sufficient) to choose $\mb P$ from an eigen-basis of $\mb S$.

\paragraph{Step 2: Maximizing the diagonal entries selects the top-$d$ eigenvectors.}
For any feasible $\mb P$ with $\mb P^\top \mb P = \mb I$, we can write
\begin{equation}
\mathrm{tr}(\mb P^\top \mb S \mb P)
= \sum_{k=1}^d \mb P_k^\top \mb S \mb P_k,
\tag{A.67}
\end{equation}
where $\mb P_k$ is the $k$-th column of $\mb P$ and each term is a Rayleigh quotient. By the Ky Fan maximum principle\cite{marcus1957maximum}, we have
\begin{align}
\max_{\mb P^\top \mb P=\mb I}\ \mathrm{tr}(\mb P^\top \mb S \mb P)
= \sum_{k=1}^d \lambda_k(\mb S),
\end{align}
and the maximizer is attained by taking $\mb P^*$ as the top-$d$ eigenvectors of $\mb S$, i.e.,
\begin{align}
\mb P^* = \mb Q_{(:,1:d)}.
\end{align}
Consequently,
\begin{align}
(\mb P^*)^\top \mb S \mb P^* = \mathrm{Diag}(\lambda_1(\mb S),\dots,\lambda_d(\mb S)),
\end{align}
which is diagonal and has the largest possible diagonal sum.

\paragraph{Step 3: The transformed data $\mb E$ and its covariance.}
Define
\begin{align}
\widetilde{\mb X} := (\mb I-\alpha \mb L)^{1/2} \mb X.
\end{align}
Then $\mb S = \mb X^\top (\mb I-\alpha \mb L)\mb X = \widetilde{\mb X}^\top \widetilde{\mb X}$. 
Following the standard PCA form, we compute the reduced representations as
\begin{align}
\mb E \ :=\ \widetilde{\mb X}\mb P^* \ =\ (\mb I-\alpha L\mb )^{1/2} \mb X \mb P^*.
\end{align}
Its covariance satisfies
\begin{align}
\mb E^\top \mb E
= (\mb P^*)^\top \widetilde{\mb X}^\top \widetilde{\mb X} \mb P^*
= (\mb P^*)^\top \mb X^\top (\mb I-\alpha \mb L)\mb X \mb P^*
= \mathrm{Diag}(\lambda_1(\mb S),\dots,\lambda_d\mb (\mb S)),
\end{align}
which verifies: (i) zero off-diagonal covariance, and (ii) the diagonal entries are exactly the top-$d$ variances, hence optimal under Eq.(\ref{eq:P*}).
This completes the derivation of the closed-form solution stated in the main text.

\subsection{Approximation Results of $(\mb I-\alpha\mb L)^{1/2}$}\label{proof:approx}

Consistent with the GCN observations, we find that low-order approximations already perform well. Therefore, we present the first-order and second-order approximation results here.

\textbf{The first-order approximation:}
\begin{align}
    (\mb I-\alpha\mb L)^{1/2}\approx \sqrt{1-\alpha}\mb I+\frac{\alpha}{2\sqrt{1-\alpha}}\mb (\mb I-\mb L).\label{eq:approx1}
\end{align}
\textbf{The second-order approximation:}
\begin{align}
    (\mb I-\alpha\mb L)^{1/2}\approx&\sqrt{1-\alpha}\mb I+\frac{\alpha}{2\sqrt{1-\alpha}}(\mb I-\mb L)\nonumber-\frac{\alpha^2}{8(1-\alpha)^{3/2}}(\mb I-\mb L)^2.\label{eq:approx2}
\end{align}

\textbf{Derivation of approximation results:}
To use Chebyshev approximation, we first define the symmetric normalized adjacency $\mb A=\mb I-\mb L$, whose eigenvalues (denoted as $\sigma(\mb A)$) satisfy $\sigma(\mb A)\subset[-1,1]$. Then, we aim to approximate 
\begin{align}
    f(\mb A)=(\mb I-\alpha\mb L)^{1/2}=((1-\alpha)\mb I+\alpha\mb A)^{1/2}
\end{align}
using Chebyshev series.
Following GCN~\cite{jiang2019semi}, it is given by
\begin{align}
    f(\mb A)\approx\sum_{k=0}^K a_kT_k(\mb A),
\end{align}
where $T_k(\cdot)$ are Chebyshev polynomials of the first kind:
\begin{align}
    &T_0(t)=1,\quad T_1(t)=t,\\
    &T_{k+1}=2tT_k(t)-T_{k-1}(t),\quad k\ge 1.
\end{align}
And $a_k$ are Chebyshev coefficients defined by the cosine-series formula
\begin{align}
    a_0&=\frac{1}{\pi}\int_0^{\pi}f(\cos\theta)\,{\rm d}\theta\qquad\text{and}\\
    a_k&=\frac{2}{\pi}\int_0^{\pi}f(\cos\theta)\cos(k\theta)\,{\rm d}\theta,\quad k\ge 1.
\end{align}

Similar to the observations in GCN, we find that low-order approximations already perform quite well. Thus, we present the approximation results for $K=1$ and $K=2$ here:
\begin{align}
    f(\mb A)&\approx \sqrt{1-\alpha}\mb I+\frac{\alpha}{2\sqrt{1-\alpha}}\mb A,\qquad\text{if $K=1$}\\
    f(\mb A)&\approx \sqrt{1-\alpha}\mb I+\frac{\alpha}{2\sqrt{1-\alpha}}\mb A-\frac{\alpha^2}{8(1-\alpha)^{3/2}}\mb A^2,\qquad\text{if $K=2$}.
\end{align}

\section{Experimental Settings}

\subsection{Datasets}\label{sec:dataset}

\begin{table}
    \caption{Statistics of the preprocessed datasets.}
    \vspace{-0.5em}
    \centering
    \begin{tabular}{ccccc}\toprule
         Dataset & \#Users & \#Items & \#Interactions & \#Sparsity\\
         \midrule
         \texttt{Yelp} & 236,482 & 128,727 & 3,590,135 & 99.98\%\\
         \texttt{Amazon Sports} & 103,757 & 201,957 & 1,691,350 & 99.99\%\\
         \texttt{Amazon CDs} & 7,685 & 5,841 & 61,564 & 99.86\%\\
         \bottomrule
    \end{tabular}
    \label{tab:dataset}
\end{table}

We adopt three real-world datasets for verification, i.e., \textbf{Yelp}~\cite{liu2024llm,liu2025llmemb,yang2025darec}, \textbf{Amazon Sports}~\cite{xu2025slmrec,huang2024improving,lin2025rec}, and \textbf{Amazon CDs}~\cite{dai2023uncovering,bao2024decoding,zhang2025reinforced}.
Yelp\footnote{\url{https://business.yelp.com/data/resources/open-dataset}} includes a large number of check-in records, which can be used for point-of-interest recommendation. The Amazon\footnote{\url{https://cseweb.ucsd.edu/~jmcauley/datasets.html}} datasets are collected from an e-commerce platform. Sports and CDs are two sub-categories of this dataset.
We split the training and test datasets using the same method as in~\cite {liu2024llm}. Specifically, we first filter out users and items with fewer than 10 interactions for Amazon and fewer than 5 interactions for Yelp. Then, we split the remaining data chronologically into training, validation, and test sets in an 8:1:1 ratio. The statistics of the preprocessed datasets are summarized in Table~\ref{tab:dataset}.

\subsection{Baselines}\label{app:baseline}
To demonstrate the superiority of our method, we compare it with four state-of-the-art LLM-enhanced sequential recommendation methods that utilise LLMs to augment conventional sequential recommendation models.
\begin{itemize}[leftmargin=*]
    \item \textbf{LLMInit}~\cite{zhang2025llminit}. It integrates pretrained LLM embeddings into conventional collaborative filtering models through selective initialization strategies, including random, uniform, and variance-based selections.
    \item \textbf{LLMEmb}~\cite{liu2025llmemb}. It proposes a Supervised Contrastive Fine-Tuning approach, along with attribute-level data augmentation. It also includes Recommendation Adaptation Training to emphasize the importance of integrating collaborative signals into LLM-generated embeddings.
    \item \textbf{LLM-ESR}~\cite{liu2024llm}. It designs a dual-view modelling framework to combine LLM and conventional SRS. It also proposes a retrieval-augmented self-distillation method to enhance user preference representation by utilising more informative interactions from similar users.
    \item \textbf{LLM2Rec}~\cite{he2025llm2rec}. It proposes a two-stage training framework that first adapts LLMs to infer item relationships based on historical interactions. Followed by refining LLMs into structured item embedding models that encode both semantic and collaborative information.
\end{itemize}

\subsection{Backbones}\label{app:backbone}
To show the flexibility of our enhancement method, we conduct experiments on three well-known sequential recommendation models. The primary distinction between these models lies in the sequence encoder.
\begin{itemize}[leftmargin=*]
    \item \textbf{GRU4Rec}~\cite{hidasi2015session}. It adapts the GRU module as the sequence encoder.
    \item \textbf{Bert4Rec}~\cite{sun2019bert4rec}. Inspired by the training pattern of BERT~\cite{devlin2019bert}, Bert4Rec proposes a cloze task that masks a proportion of items in a single sequence. The sequence encoder of Bert4Rec is the stack of bi-directional self-attention layers.
    \item \textbf{SASRec}~\cite{kang2018self}. Compared to Bert4rec, SASRec employs the causal self-attention layer as the fundamental unit of its sequence encoder.
\end{itemize}
Note that among these three models, Bert4Rec employs the Cross-Entropy loss, while the other models utilize BPR loss. Since numerous studies~\cite{sun2019bert4rec,xu2024understanding,klenitskiy2023turning} have demonstrated that the Cross-Entropy loss outperforms BPR loss in these sequential recommendation models, and the relative performance of models is likely to shift when identical loss functions are applied. Therefore, to ensure experimental fairness, we consistently applied Cross-Entropy loss across all three models.

\subsection{Implementation Details}\label{app:implement}

We implement all methods using PyTorch and employ Adam as the optimizer in all experiments. For all LLM-enhanced methods, we use \llmname{Qwen3-Embedding-8B} as the backbone LLM for encoding text to ensure fairness. For all sequential recommendation backbones, we tune their hyperparameters to an optimal. For our method, the weight decay is selected from \{1e-3, 1e-5, 0\}. The learning rate is fixed to 1e-3. $\alpha$ is selected from \{0.1, 0.2, 0.3, 0.4, 0.5\}. In constructing the item-item co-occurrence graph, we apply K-sparsification~\cite{he2020lightgcn,peng2024powerful} and retain only $K$ neighbors with the highest co-occurrence frequency for each node. We select $K$ from $\{1,3,5,7,9\}$. We set the total number of training epochs to 100. We also conduct early stopping based on the NDCG@10 on the validation set, stopping training when the NDCG@10 doesn't increase for 10 consecutive epochs. All hyperparameters of the compared baselines are tuned to ensure optimal performance.

\section{More Experimental Results}

\begin{figure*}
\centering
  \includegraphics[width=\linewidth]{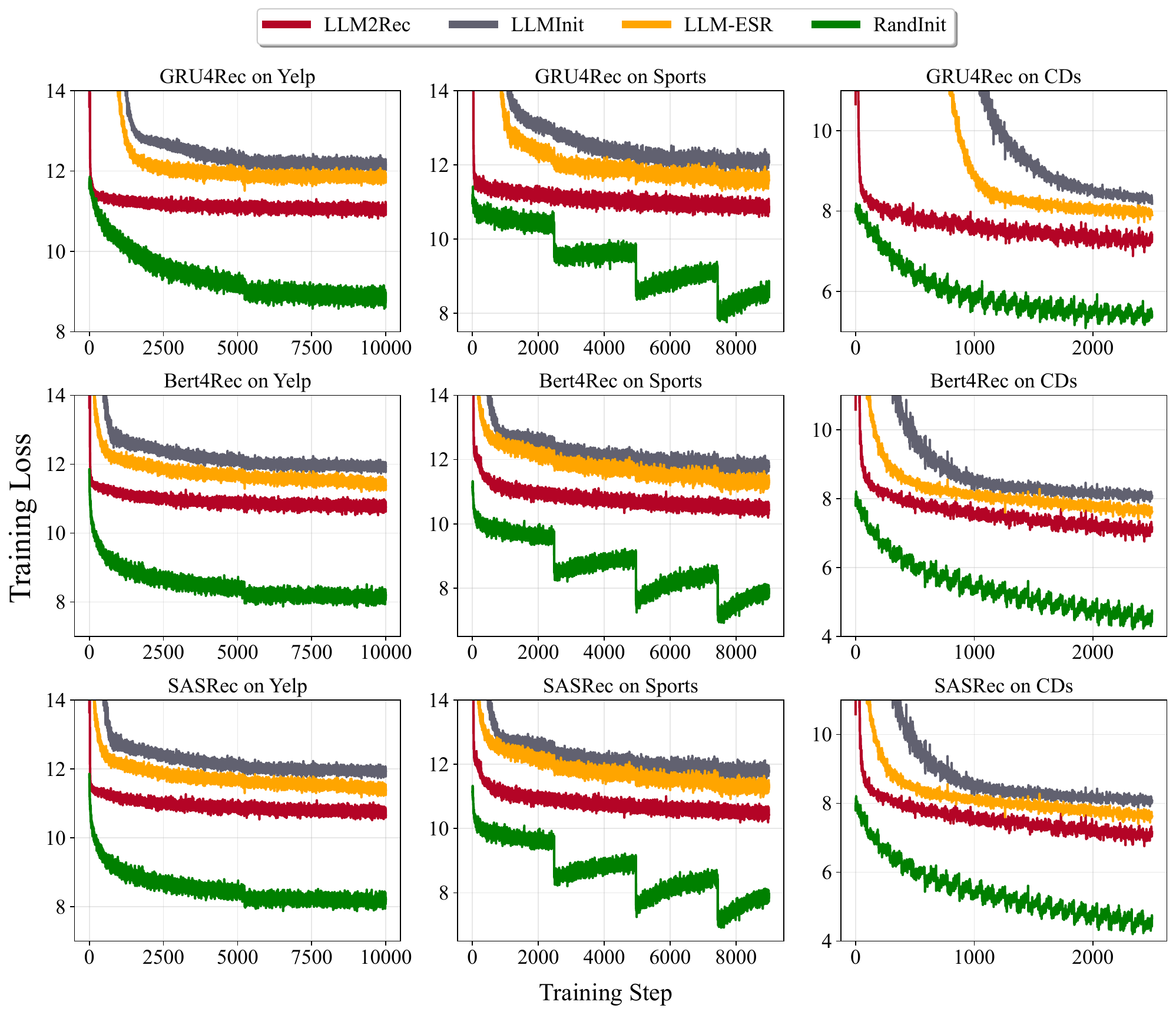}
    \vspace{-1.5em}
  \caption{Training loss of several methods, including the standard randomly initialized model (denoted as RandInit) and state-of-the-art LLM-enhanced methods.}
  \label{fig:loss_methods_all}
    \vspace{-0.5em}
\end{figure*}

\begin{figure*}
\centering
  \includegraphics[width=\linewidth]{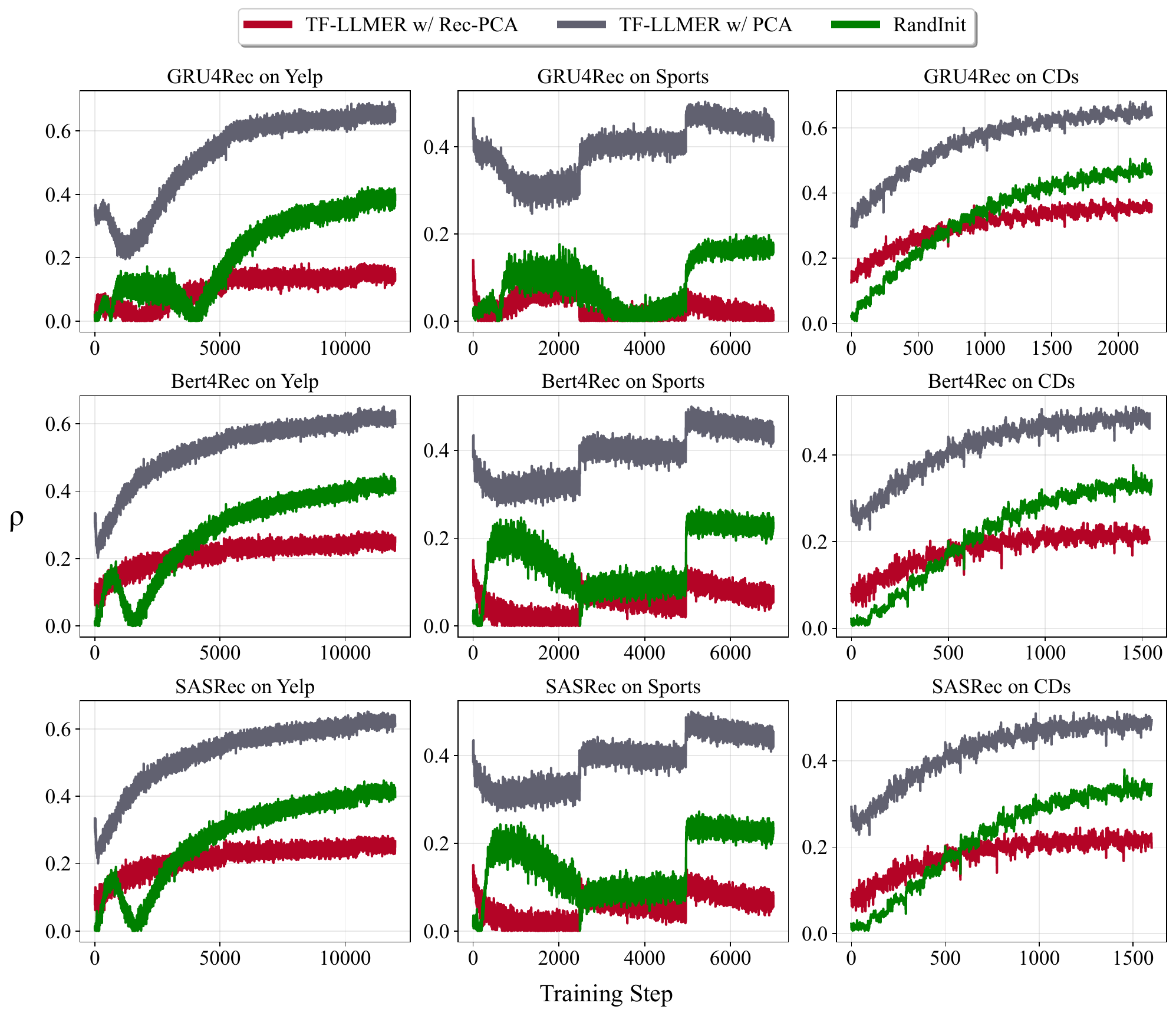}
  \caption{Training curves of $\rho$ on the training set for Rec-PCA, our method with vanilla PCA, and random initialization (denoted as RandInit) across three datasets (Yelp, Sports, and CDs) and three backbones.}
  \label{fig:rho_all}
\end{figure*}

\begin{figure*}
\centering
  \includegraphics[width=\linewidth]{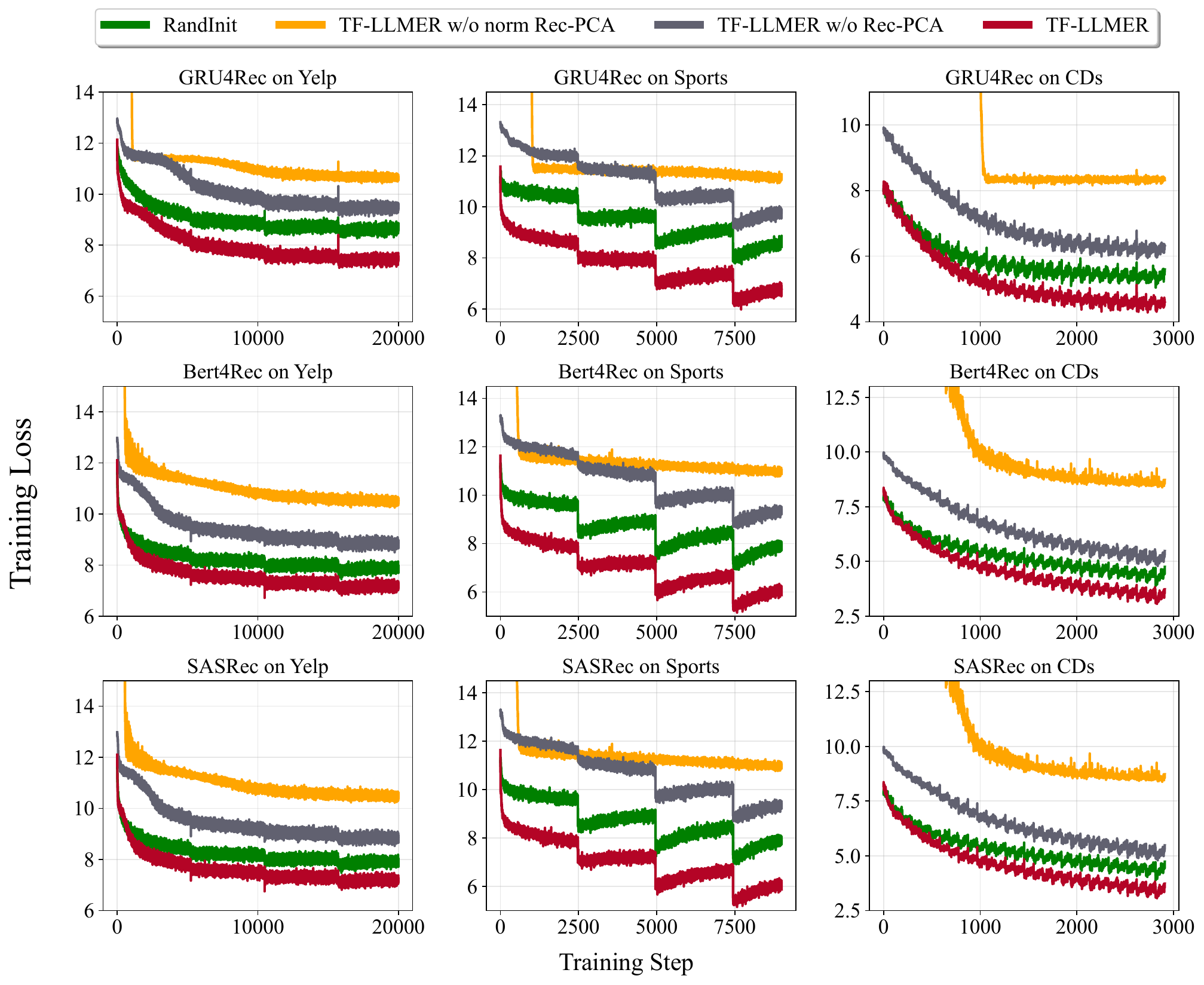}
    \vspace{-1.5em}
  \caption{Training loss of several variants, including randomly initialized models (denoted as RandInit), our method without normalization and Rec-PCA (denoted as \ourmethod w/o norm Rec-PCA), our method without Rec-PCA (denoted as \ourmethod w/o Rec-PCA), and the full version of our method (denoted as \ourmethod).}
  \label{fig:loss_all}
    \vspace{-0.5em}
\end{figure*}

\subsection{Training Losses of Several LLM-Enhanced Recommendation methods}\label{app:loss_method_all}

In this section, we present the complete training loss results for several strong LLM-enhanced methods. We also report the training loss of randomly initialized recommendation models (denoted as RandInit) for comparison. The results are provided in Figure~\ref{fig:loss_methods_all}. The results demonstrate that, compared to the original randomly initialized model, the training loss of all state-of-the-art LLM-enhanced methods ceases to decrease rapidly and stabilizes at significantly higher levels. It indicates that all LLM-enhanced methods hinder the training of sequential recommendation models.

\subsection{Training Curve of $\rho$ for Compared Methods}

This section provides the complete results for the training curves of $\rho$ for the compared methods, including our Rec-PCA, our method with standard PCA, and the randomly initialized recommendation model (denoted as RandInit), across all three datasets. The results are provided in Figure~ƒ\ref{fig:rho_all}. The results demonstrate that across all datasets, our Rec-PCA consistently reduces effective coherence. This can further reduce the upper bounds of $\kappa(\hat{\mb E}_\mc U\hat{\mb E}_\mc U^\top)$ and $\kappa(\mb H_{h\mc U})$, thereby providing a theoretical basis for promoting training.

\subsection{Losses of Several Variants of Our Method}

This section presents the training loss for all variants of our method: our method without embedding normalization and Rec-PCA (\ourmethod w/o norm Rec-PCA), our method without Rec-PCA (\ourmethod w/o Rec-PCA), and the full version of our method (\ourmethod). We also present the training loss of randomly initialized recommendation models for comparison. The results are given in Figure~\ref{fig:loss_all}. From the results, we observe that:
\begin{enumerate}
    \item Without embedding normalization, the loss starts extremely high. The decline of training loss stalls rapidly and finally stabilizes at a level substantially higher than that achieved with normalization.
    \item After replacing vanilla PCA with Rec-PCA, the training loss further decreases. Throughout the entire training process, the loss remains lower than that of other comparison methods. The training loss is also significantly lower than RandInit.
\end{enumerate}
The above observations validate that our proposed item embedding normalization and Rec-PCA are effective in facilitating training LLM-enhanced methods.

\subsection{Ablation Study}\label{app:abl}

\begin{figure}
\centering
  \includegraphics[width=\linewidth]{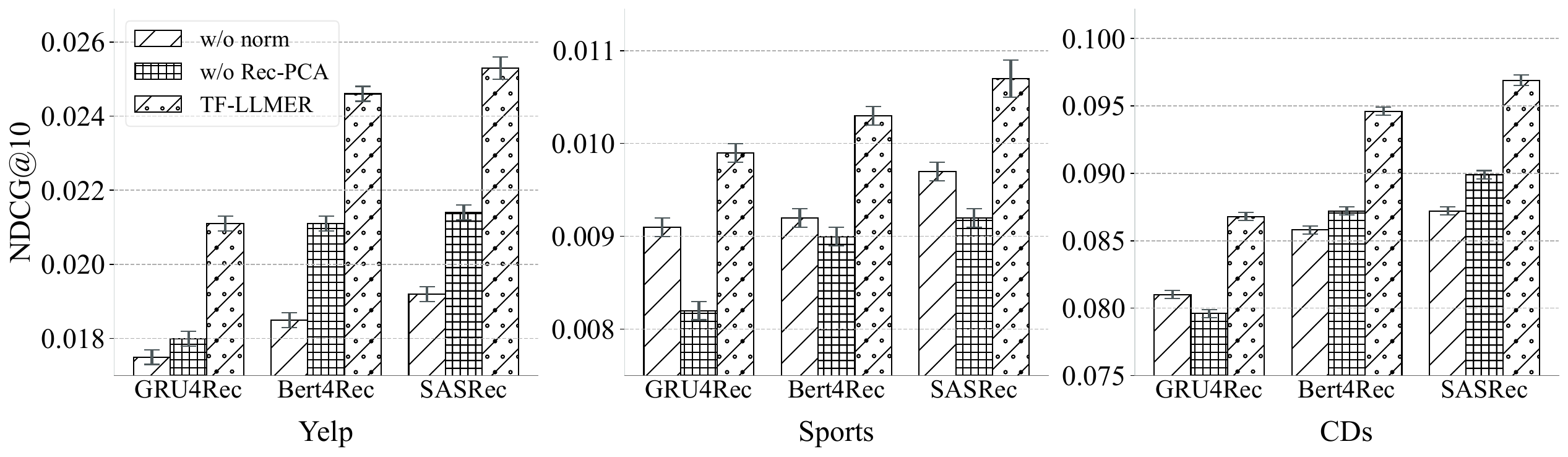}
  \caption{Ablation study of our method, including three variants.}
  \label{fig:abl_all}
\end{figure}

Figure~\ref{fig:abl_all} provides the full results of the ablation study on all three datasets. Across all three datasets, we observe that all variants are inferior to the original \ourmethod, demonstrating the effectiveness of the key components. Moreover, we observe that item normalization has a large effect on Yelp and Amazon Sports, while Rec-PCA has a greater impact on Amazon Sports. We believe this relates to the characteristics of the item representations generated by the LLM across different datasets. On Yelp and Amazon CDs, the LLM-generated item representations exhibit larger magnitude differences, leading to more significant training challenges. On Amazon Sports, increasing the distinguishability between effective items significantly improves training effectiveness, thereby amplifying Rec-PCA's impact on recommendation performance.

\subsection{Hyperparameter Analysis}\label{sec:hyper}

\begin{figure}
\centering
  \includegraphics[width=0.7\linewidth]{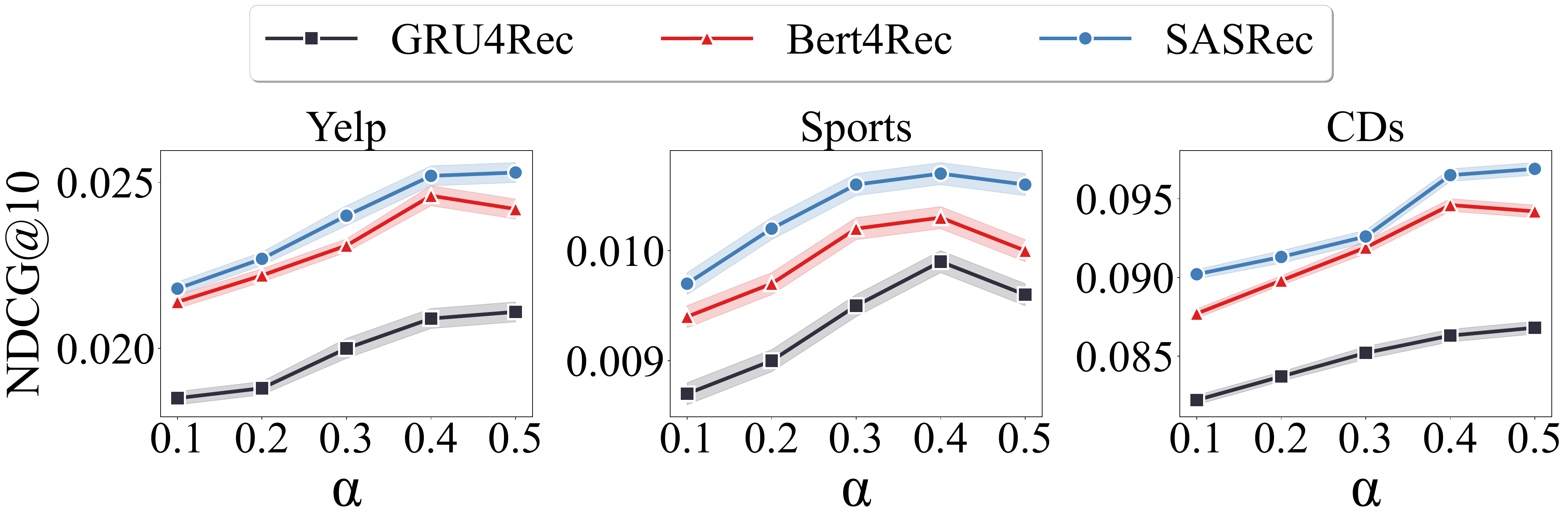}
  \caption{Hyperparameter $\alpha$.}
  \label{fig:hyper_alpha}
\end{figure}

\begin{figure}
\centering
  \includegraphics[width=0.7\linewidth]{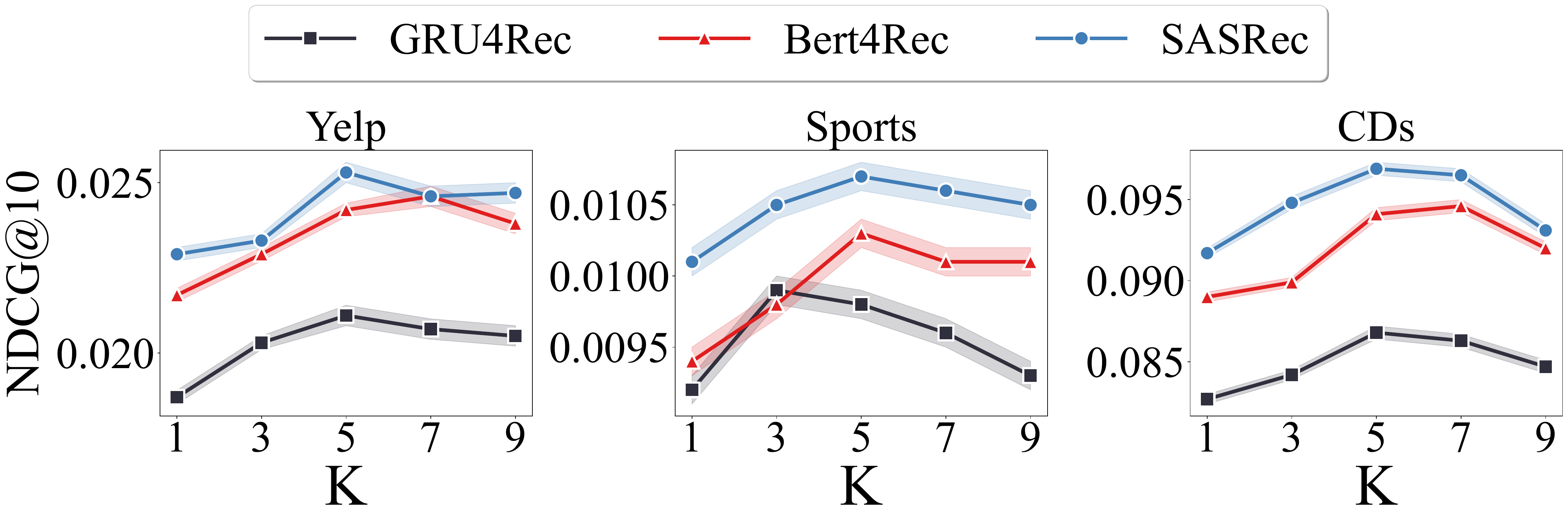}
  \caption{Hyperparameter $K$.}
  \label{fig:hyper_k}
\end{figure}

\textbf{Effects of $\alpha$.}
In Eq.(\ref{eq:target}), we use $\alpha$ to balance the objectives of maximizing variance and maintaining a collaborative relationship in Rec-PCA. Note that to ensure that our maximization objective in Eq.(\ref{eq:target}) is diagonalizable, we require $\alpha\le 0.5$. In Figure~\ref{fig:hyper_alpha}, we report the effect of $\alpha$. The results suggest that the trends across different datasets and backbones are similar. For general, the best choice of $\alpha$ lies in $\{0.4, 0.5\}$.

\textbf{Effects of $K$.}
In Rec-PCA, we use the item-item co-occurrence graph to capture the collaborative relationships among items in the recommendation task. Since the original graph contained significant noise, we apply K-sparsification, as is standard practice in graph-based recommendation systems~\cite{he2020lightgcn,peng2024powerful}. This process retained only the $K$ neighbors with the highest co-occurrence frequency for each node. In Figure~\ref{fig:hyper_k}, we analyze the effect of $K$. The best choice of $K$ lies in $\{3,5,7\}$. We find that both too large and too small values of $K$ lead to worse performance.


\end{document}